\newcommand{\etal}{\textit{et~al.\@}\xspace}
\title{Faster, Deterministic and Space Efficient Subtrajectory Clustering}
\author
{Ivor van der Hoog}
{Department of Applied Mathematics and Computer Science, Technical University of Denmark, Denmark}
{vanderhoog@gmail.com}
{https://orcid.org/0009-0006-2624-0231}
{This project has received funding from the European Union's Horizon 2020 research and innovation programme under the Marie Sk\l{}odowska-Curie grant agreement No 899987, and from the Carlsberg Foundation via Eva Rotenberg's Young Researcher Fellowship CF21-0302 ``Graph Algorithms with Geometric Applications''.}
\author
{Thijs van der Horst}
{Department of Information and Computing Sciences, Utrecht University, the Netherlands \and  Department of Mathematics and Computer Science, TU Eindhoven, the Netherlands} {t.w.j.vanderhorst@uu.nl}
{https://orcid.org/0009-0002-6987-4489}
{}
\author
{Tim Ophelders}
{Department of Information and Computing Sciences, Utrecht University, the Netherlands \and 
Department of Mathematics and Computer Science, TU Eindhoven, the Netherlands}
{t.a.e.ophelders@uu.nl}
{https://orcid.org/0000-0002-9570-024X}
{Partially supported by the Dutch Research Council (NWO) under the project number VI.Veni.212.260.}
\authorrunning{I. van der Hoog, T. van der Horst, and T. Ophelders}
\keywords{Fr\'echet distance, clustering, set cover}
\newcommand{\f}{Fr\'echet\xspace}
\newcommand{\dF}{\ensuremath{d_F}}
\mathchardef\mhyphen="2D
\newcommand{\FSD}[1][\Delta'] {\ensuremath{#1}\mhyphen\mathrm{FSD}}
\newcommand{\eps}{\varepsilon}
\newcommand{\from}{\colon}
\newcommand{\bigO}{\mathcal{O}}
\newcommand{\Ot}{\ensuremath{\tilde\bigO}}
\newcommand{\R}{\mathbb{R}}
\newcommand{\N}{\mathbb{N}}
\newcommand{\B}{\mathbb{B}}
\newcommand{\I}{\mathcal{I}}
\newcommand{\U}{\mathcal{U}}
\newcommand{\ver}{\mathrm{ver}}
\newcommand{\pre}{\mathrm{pre}}
\newcommand{\suf}{\mathrm{suf}}
\newcommand{\sub}{\mathrm{sub}}
\newcommand{\rev}{\overleftarrow}
\newcommand{\Cov}{\ensuremath{\mathrm{Cov}}}
\newcommand{\floor}[1]{\lfloor#1\rfloor}
\newcommand{\ceil}[1]{\lceil#1\rceil}
\newcommand{\SW}{\mathit{SW}}
\begin{document}

\maketitle

\begin{abstract}
    Given a trajectory $T$ and a distance $\Delta$, we wish to find a set $C$ of curves of complexity at most $\ell$, such that we can cover $T$ with subcurves that each are within Fr\'echet distance $\Delta$ to at least one curve in $C$.
    We call $C$ an $(\ell,\Delta)$-clustering and aim to find an $(\ell,\Delta)$-clustering of minimum cardinality.
    This problem variant was introduced by Akitaya~\textit{et al.}~(2021) and shown to be NP-complete.
    The main focus has therefore been on bicriteria approximation algorithms, allowing for the clustering to be an $(\ell, \Theta(\Delta))$-clustering of roughly optimal size.
    
    We present algorithms that construct $(\ell,4\Delta)$-clusterings of $\mathcal{O}(k \log n)$ size, where $k$ is the size of the optimal $(\ell, \Delta)$-clustering.
    We use $\mathcal{O}(n^3)$ space and $\mathcal{O}(k n^3 \log^4 n)$ time.
    Our algorithms significantly improve upon the clustering quality (improving the approximation factor in $\Delta$) and size (whenever $\ell \in \Omega(\log n / \log k)$).
    We offer deterministic running times improving known expected bounds by a factor near-linear in $\ell$.
    Additionally, we match the space usage of prior work, and improve it substantially, by a factor super-linear in $n\ell$, when compared to deterministic results.
\end{abstract}

\section{Introduction}
\label{sec:introduction}

    In subtrajectory clustering, the goal is to partition an input trajectory $T$ with $n$ vertices into subtrajectories and group them into \emph{clusters} such that all subtrajectories within a cluster have low \f distance to one another. 
    Clustering under the \f distance is a natural application of the \f distance and a well-studied topic~\cite{driemel2016clustering, cheng2023curve, buchin2022coresets, buchin2019approximating, gudmundsson2022cubic} with applications in, for example, map reconstruction~\cite{buchin2017clustering, buchin_improved_2020}. 
    In recent years, several variants of this algorithmic problem have been proposed~\cite{buchin_detecting_2011, agarwal_subtrajectory_2018, bruning_subtrajectory_2023, bruning_faster_2022}. Regardless of the variant, the subtrajectory clustering problem has been shown to be NP-complete~\cite{buchin_detecting_2011, agarwal_subtrajectory_2018, bruning_subtrajectory_2023}.

    We focus on the problem variant proposed by  Akitaya, Br\"{u}ning, Chambers, and Driemel~\cite{bruning_subtrajectory_2023}. 
     Given a trajectory $T$ and a distance $\Delta$, and some $\ell$, they compute what we call an \emph{$(\ell, \Delta)$-clustering} $C$ of $T$.
    Each cluster $Z \in C$ is a set of subtrajectories together with a center curve (the `reference curve' $P_Z$) of complexity at most $\ell$. Each curve in a cluster must have Fr\'echet distance at most $\Delta$ to the center and each point on $T$ must be present in at least one cluster. 
    The goal is to compute an $(\ell, \Delta)$-clustering of minimum cardinality. 
    Note that the parameter $\ell$ is necessary to not trivialize the problem. Indeed, if $P_Z$ may have arbitrary complexity, then a trivial $(n, 0)$-clustering exists consisting of a single cluster $Z$ where $Z = \{ T \}$.
    
    Akitaya~\etal~\cite{bruning_subtrajectory_2023} propose a bicriteria approximation scheme:
    Given $\ell$ and $\Delta$, let $k$ be the minimum size of an $(\ell, \Delta)$-clustering of $T$. The goal is  to compute an $(\ell, \Theta(\Delta))$-clustering of size $\bigO(f(k))$. 
    This paradigm was studied in~\cite{bruning_subtrajectory_2023, bruning_faster_2022, conradi2023finding} and previous results are summarised in Table~\ref{tab:results}.      
    \cite{bruning_subtrajectory_2023} computes an 
    $(\ell, 19\Delta)$-clustering of $\bigO(k \ell^2 \log( k \ell))$ size. 
    The running time and space bounds depend on the \emph{arc length} of $T$ relative to $\Delta$.
    Br\"{u}ning, Conradi and Driemel\cite{bruning_faster_2022} compute an $(\ell, 11\Delta)$-clustering of $\bigO(k \ell \log k)$ size (where the hidden constant is exceptionally large). 
    Their algorithm uses $\Ot(n^3)$ space and has $\Ot(kn^3)$ expected running time.
    Recently, Conradi and Driemel~\cite{conradi2023finding} improve both the size and the quality of the clustering.
    They compute an $(\ell, 11 \Delta)$-clustering of $\bigO(k \log n)$ size in $\Ot( n^4 \ell)$ space and $\Ot(k n^4 \ell + n^4 \ell^2)$ time.

    \begin{table}[t]
    \centering
    \begin{tabular}{l|l|l|l|l}
      {\# Clusters} &     $\Delta' =$ & {Time}                                   & {Space}                 & {Source}\\
    \hline\rule{0pt}{2.6ex}%
           \color{red}{$\bigO(k \ell^2 \log (k\ell))$} & \color{red}{$19\Delta$} & \color{red}{$\Ot(k \ell^4 \lambda^2 + n \lambda)$} & \color{red}{$\bigO(n + \lambda)$} & \color{red}{\cite{bruning_subtrajectory_2023}}\\
            \color{red}{$\bigO(k \ell \log k)$} & \color{red}{$11\Delta$} & \color{red}{$\Ot(k n^3 \ell)$}      & \color{red}{$\Ot(n^3)$}      & \color{red}{\cite{bruning_faster_2022}}\\
            $\bigO(k \log n)$           &      $11 \Delta$ & $\Ot(k n^4 \ell + n^4 \ell^2)$ & $\Ot(n^4 \ell)$ & \cite{conradi2023finding}\\
            $\bigO(k \log n )$            &      $4 \Delta$ & $\bigO(k n^3 \log^4 n)$                  & $\bigO(n^3)$  & Thm.~\ref{thm:yes-no-no}\\
    \end{tabular}
    \caption{
        Prior work and our result.
        The first two (red) rows indicate randomized results. 
        $k$ denotes the smallest $(\ell, \Delta)$-clustering size of $T$.
        $\lambda$ denotes the arc length of $T$ relative to $\Delta$.
    }
    \label{tab:results}
    \end{table}

\subparagraph*{Results.}
    We present a bicriteria approximation algorithm that uses $\bigO( k n^3 \log^4 n)$ time and $\bigO(n^3)$ space, and computes an $(\ell, 4\Delta)$-clustering of size $\bigO( k \log n)$. 
    When compared to previous works~\cite{bruning_subtrajectory_2023, conradi2023finding, bruning_faster_2022} our results:
    \begin{itemize}
        \item obtain deterministic results and improve the running time by a factor near-linear in $\ell$,
        \item match the space usage,
        \item improve the approximation in $\Delta$ from a factor $11$ to $4$, 
        \item asymptotically match the clustering size (whenever $\ell \in \Omega(\log n / \log k)$).
    \end{itemize}
    Compared exclusively to deterministic results~\cite{conradi2023finding}, we instead improve time by a factor near-linear in $n \ell$, space by a factor super-linear in $n \ell$, and obtain asymptotically equal clustering size for all $\ell$ (see also~\cref{tab:results}).

\subparagraph*{Methodology and contribution.}
    Our algorithm constructs a clustering iteratively by greedily adding a cluster that covers an approximately-maximum set of uncovered points on $T$.
    The challenge is to compute such a cluster.
    Previous work~\cite{bruning_faster_2022, bruning_subtrajectory_2023} presented randomized algorithms for constructing a cluster based on $\eps$-net sampling over the set of all candidate clusters.
    They shatter the set of candidate clusters and show that it has bounded VC dimension, which leads to their asymptotic approximation of $k$ --- the minimum size of an $(\ell, \Delta)$-clustering.
    The algorithm of Conradi and Driemel~\cite{conradi2023finding} is more similar to ours.
    They also simplify the input and iteratively select the cluster with the (exact) maximum coverage to obtain an $(\ell, \Delta)$-clustering of size $\bigO(k \log n)$.
    The key difference lies in finding the next cluster.
    Conradi and Driemel~\cite{conradi2023finding} explicitly consider a set of $\bigO(n^3 \ell)$ candidate clusters, which requires $\bigO(n^4 \ell)$ time and space to construct.
    
    We make two key contributions that distinguish us from prior works:
    First, we present a novel simplification algorithm that computes a curve $S$ such that we may restrict potential reference curves of clusters to be subcurves of $S$.
    This new curve simplification technique allows us to create a clustering where clusters have radius at most $4\Delta$ as opposed to $11\Delta$. 
    Second, we prove that we may restrict the reference curves to be one of two types:
    \begin{itemize}
        \item vertex-subcurves of $S$, which are subcurves that start and end at a vertex of $S$, \\ (we may furthermore only consider subcurves whose complexity is a power of $2$)
        \item and subedges of $S$, which are subcurves that are a subsegment of a single edge of $S$.
    \end{itemize}
    We prove that a greedy algorithm that exclusively adds maximal clusters where the reference curve is of one of these two types creates a clustering of size $\bigO(k \log n)$. 
    This characterization reduces the set of candidate clusters from $\Ot(n^3 \ell)$ to $\Ot(n^2)$ which significantly reduces the time spent compared to~\cite{conradi2023finding}.
    The geometric characterization of these subcurves allow us to compute candidate clusters on the fly, significantly reducing space usage. 

\section{Preliminaries}

    A \emph{(polygonal) curve} with $n$ vertices is a piecewise-linear map $P\from[1,n]\to\R^d$ whose breakpoints (called \emph{vertices}) are at each integer parameter, and whose pieces are called \emph{edges}.
    We denote by $P[a, b]$ the subcurve of $P$ that starts at $P(a)$ and ends at $P(b)$.
    If $a$ and $b$ are integers, we call $P[a,b]$ a \emph{vertex subcurve} of $P$.
    Let $|P|$ denote the number of vertices of~$P$.

\subparagraph*{\f distance.}
    A \emph{reparameterization} of $[1, n]$ is a non-decreasing surjection $f \from [0, 1] \to [1, n]$.
    Two reparameterizations $f$ and $g$ of $[1, m]$ and $[1, n]$, respectively, describe a \emph{matching} $(f, g)$ between two curves $P$ and $Q$ with $n$ and $m$ vertices, where for any $t \in [0, 1]$, point $P(f(t))$ is matched to $Q(g(t))$.
    A matching $(f, g)$ is said to have \emph{cost} $
        \max_t~\lVert P(f(t)) - Q(g(t)) \rVert,$  where $\lVert \cdot \rVert$ denotes the Euclidean norm.
    A matching with cost at most $\Delta$ is called a \emph{$\Delta$-matching}.
    The (continuous) \emph{\f distance} $\dF(P, Q)$ between $P$ and $Q$ is the minimum cost over all matchings.
    
\subparagraph*{Free space diagram.}
    The \emph{parameter space} of curves $P$ and $Q$ with $m$ and $n$ vertices, respectively, is given by the orthogonal rectangle $[1, m] \times [1, n]$.
    This parameter space is associated with a regular grid whose cells are the squares $[i, i+1] \times [j, j+1]$ for integers $i$ and~$j$.
    A point $(x, y)$ in the parameter space corresponds to the pair of points $P(x)$ and $Q(y)$.
    We say that $(x, y)$ is \emph{$\Delta$-free} if $\lVert P(x) - Q(y) \rVert \leq \Delta$.
    The \emph{$\Delta$-free space diagram} $\FSD[\Delta](P, Q)$ of $P$ and $Q$ is the set of $\Delta$-free points in the parameter space of $P$ and $Q$.
    The \emph{obstacles} of $\FSD[\Delta](P, Q)$ are the connected components of $([1, m] \times [1, n]) \setminus \FSD[\Delta](P, Q)$.

    Alt and Godau~\cite{alt95continuous_frechet} observe that the \f distance between $P[x_1, x_2]$ and $Q[y_1, y_2]$ is at most $\Delta$ if and only if there is a bimonotone path in $\FSD[\Delta](P, Q)$ from $(x_1, y_1)$ to $(x_2, y_2)$ (and $x_1 \leq x_2$ and $y_1 \leq y_2$).
    
\subparagraph*{Input and output.}
    Our input is a curve $T$ with $n$ vertices, which we will call the \emph{trajectory}, some integer parameter $\ell \geq 2$,
    and some distance parameter $\Delta \geq 0$.
    We consider \emph{clustering} subtrajectories of $T$ using \emph{pathlets}:
    
    \begin{definition}[Pathlet]
        An $(\ell, \Delta)$-\emph{pathlet} is a tuple $(P, \I)$ where $P$ is a curve with $|P| \leq \ell$ and $\I$ is a set of intervals in $[1, n]$, where $\dF(P, T[a, b]) \leq \Delta$ for all $[a, b] \in \I$.
        We call $P$ the \emph{reference curve} of $(P, \I)$.
\end{definition}

    \begin{figure}[b]
        \centering
        \includegraphics[page=1, width = 0.8\linewidth]{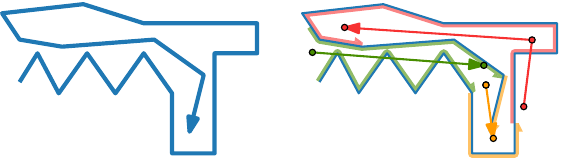}
        \caption{The trajectory $T$ (blue, left) is covered by three pathlets.
        Each pathlet is defined by a reference curve (green, red, yellow) and the subcurve(s) of $T$ the curve covers.
        }
        \label{fig:subtrajectory_clustering}
    \end{figure}

   We can see a pathlet $(P, \I)$ as a cluster, where the center is $P$ and all subtrajectories induced by $\I$ get mapped to $P$.
    See~\cref{fig:subtrajectory_clustering}.
    An $(\ell, \Delta)$-clustering of $T$ is defined as follows:

    \begin{definition}
        \label{def:cluster}
        An \emph{$(\ell, \Delta)$-clustering} $C$ is a set of $(\ell, \Delta)$-pathlets with $\bigcup\limits_{(P, \I) \in C} \bigcup\limits_{I \in \I} I=  [1, n]$.
    \end{definition}

    \noindent
    Throughout this paper, we let $k_\ell(\Delta)$ denote the smallest integer for which there exists an $(\ell, \Delta)$-clustering of size $k_\ell(\Delta)$.
    The goal is to find an $(\ell, \Delta')$-clustering $C$ where $|C|$ is not too large compared to $k_\ell(\Delta)$, and $\Delta' \in \bigO(\Delta)$.

\subparagraph{Weighting a cluster.}
We define a \emph{universe} $\U$ as any set of interior-disjoint closed intervals that together cover $[1, n]$. 
Given a fixed universe $\U$, we can weigh each pathlet by what we call its \emph{coverage}:

\begin{definition}
    The \emph{coverage} over $\U$ of a pathlet $(P, \I)$ is $\Cov_\U(P, \I) = \{I \in \U \mid I \subseteq \Cov(P, \I)\}$.
    The coverage of a set $C$ of pathlets is $\Cov_\U(C) = \sum_{(P, \I) \in C} \Cov_\U(P, \I)$.
\end{definition}

\noindent
Whenever $\U$ is clear from context we omit the subscript $\U$.

    \begin{definition}[Reference optimal]
         Let the universe $\U$ be fixed and let $C$ be a set of $(\ell, \Delta)$-pathlets. An $(\ell, \Delta)$-pathlet $(P, \I)$ is \emph{reference-optimal} if its coverage over $\U \setminus \Cov(C)$, i.e., $|\Cov(P, \I) \backslash \Cov(C)|$, is maximum over all $(\ell, \Delta)$-pathlets with the same reference curve.
\end{definition}

\begin{definition}
    Let the universe $\U$ be fixed and let $C$ be a set of $(\ell, \Delta)$-pathlets.
        An $(\ell, \Delta)$-pathlet $(P, \I)$ is \emph{optimal} whenever $|\Cov(P, \I) \backslash \Cov(C)|$ is maximum over all $(\ell, \Delta)$-pathlets.
    \end{definition}

    \noindent

\section{Algorithmic outline}
\label{sec:outline}

    Our algorithmic input is a trajectory $T$, an integer $\ell \geq 2$, and value $\Delta \geq 0$.
    We provide a high-level overview of our algorithm here.   
    Our approach can be decomposed as follows:
    
    \begin{enumerate}
        \item Reference curves may lie anywhere in the ambient space.
        Our first step is to restrict where these reference curves may lie.
        In~\cref{sec:simplification} we construct a $2\Delta$-simplification $S$ of $T$, and prove that for any $(\ell, \Delta)$-pathlet $(P, \I)$, there exists a subcurve $S[a, d]$ of $S$ for which $(S[a, d], \I)$ is an $(\ell+2-|\N \cap \{a, d\}|, \Delta')$-pathlet, where $\Delta' = 4\Delta$.
        Hence we may restrict our attention to pathlets where the reference curve is a subcurve of $S$, if we allow for a slightly higher complexity.
        This higher complexity is circumvented later on, to still give an $(\ell, \Delta')$-clustering.
        \item  In~\cref{sec:set_cover}, Given $S$ and $T$, we smartly create some universe $\U$. We prove, by adapting the argument for greedy set cover, that any algorithm that iteratively computes an optimal $(\ell, \Delta)$-pathlet outputs a clustering of size $\bigO(k_{\ell}(\Delta) \log n)$.
        \item In~\cref{sec:the_algorithm} we give the general algorithm.
        We choose some $\Delta' \in \Theta(\Delta)$. We iteratively construct an $(\ell, \Delta')$-clustering of size $\bigO(k_{\ell}(\Delta) \log n)$.
        Our greedy iterative algorithm maintains a set $C$ of pathlets and adds an $(\ell, \Delta')$-pathlet $(P, \I)$ to $C$ at every iteration.
        
        Consider having a set of pathlets $C = \{ (P_i, \I_i) \}$.
        We greedily select a pathlet $(P, \I)$ that covers as much of $\U \setminus \Cov(C)$ as possible, and add it to $C$. 
        Formally, we select a \emph{$(\Delta, \frac{1}{17})$-maximal} $(\ell, \Delta')$-pathlet: an $(\ell, \Delta')$-pathlet $(P, \I)$ such that
        \[
        | \Cov(P, \I) \setminus \Cov(C) | \geq \frac{1}{17} | \Cov(P', \I') \setminus \Cov(C) |
        \]
        for all $(\ell, \Delta)$-pathlets $(P', \I')$.        
        \item The subsequent goal is to compute $(\Delta, \frac{1}{17})$-maximal pathlets.
        We restrict pathlets to two types: those where the reference curve is 1) a vertex subcurve of $S$, or 2) a subsegment of an edge of $S$.
        Then we give algorithms for constructing pathlets of these types with a certain quality guarantee, i.e., pathlets that cover at least a constant fraction of what the optimal pathlet of that type covers.
        These algorithms are given in~\cref{sec:vertex-to-vertex_pathlets,sec:subedge_pathlets}.
    \end{enumerate}

            \subparagraph*{Reachability graph.} We introduce the \emph{reachability graph} in~\cref{sec:reachability_graph}.
            This graph is defined on a subcurve $W$ of $S$ and a set $Z$ of points in $\FSD(W, T)$.
            The reachability graph $G(W, T, Z)$ is a directed acyclic graph whose vertices are the set of points $Z$, together with certain boundary points of the free space $\FSD(W, T)$ and a collection of \emph{steiner points}.
            Given two points $(x, y)$ and $(x', y')$ in $Z$, the graph contains a directed path from $(x, y)$ to $(x', y')$ if and only if $d_F(W[x, x'], T[y, y']) \leq \Delta'$.

            We treat the free space diagram as a rectilinear polygon $\mathcal{R}$ with rectilinear holes, obtained by reducing all obstacles of $\FSD(W, T)$ to their intersections with the parameter space grid.
            We show that a bimonotone path between two points $p$ and $q$ exists in $\FSD(W, T)$ if and only if a rectilinear shortest path between $p$ and $q$ in $\mathcal{R}$ has length $\lVert p-q \rVert_1$, the $L_1$-distance between $p$ and $q$.
            The reachability graph $G(W, T, Z)$ is defined as the \emph{shortest paths preserving graph}~\cite{widmayer91rectilinear_graphs} for the set $Z$ with respect to $\mathcal{R}$, made into a directed graph by directing edges, which are all horizontal or vertical, to the right or top.
            This graph has $\bigO((|W|n + |Z|) \log (n|Z|))$ complexity, and a shortest path in the graph between points in $Z$ is also a rectilinear shortest path between the corresponding points in~$\mathcal{R}$.

            \subparagraph*{Vertex-to-vertex pathlets.} In~\cref{sec:vertex-to-vertex_pathlets} we construct a pathlet where the reference curve is a vertex subcurve of $S$.
            For a given vertex $S(i)$ of $S$, we construct reference-optimal $(\ell, \Delta')$-pathlets $(S[i, i+j], \I_j)$ for all $j \in [\ell]$.
            We first identify a set $Z$ of $\bigO(n \ell)$ \emph{critical points} in $\FSD(S[i, i+\ell], T)$.
            We show that for every reference curve $S[i, i+j]$, there is a reference-optimal $(\ell, \Delta')$-pathlet $(S[i, i+j], \I_j)$ where for each interval $[y, y'] \in \I_j$, the points $(i, y)$ and $(i+j, y')$ are critical points.
            We construct the intervals $\I_j$ through a sweepline algorithm over the reachability graph $G(S[i, i+\ell], T, Z)$, which has $\bigO(n \ell \log n)$ complexity.
            Our sweepline computes, for all $j \in [\ell]$, a reference-optimal $(j, \Delta')$-pathlet $(S[i, i+j], \I_j)$ by iterating over all in-edges to critical points $(i+j, y)$ in $G(S[i, i+\ell], T, Z)$.
            Doing this for all $i$ (and remembering the optimum)
            thereby takes $\bigO(n^2 \ell \log^2 n)$ time and $\bigO(n \ell \log n)$ space.

            \subparagraph*{Subedge pathlets.} In~\cref{sec:subedge_pathlets} we construct a pathlet where the reference curve is a subsegment of an edge of $S$.
            For a given edge $e$ of $S$, we again first identify a set $Z$ of $\bigO(n^2)$ \emph{critical points} in $\FSD(e, T)$.
            However, rather than restricting the intervals in pathlets based on these critical points, we restrict the reference curves based on these critical points.
            Specifically, there are $m = \bigO(n)$ unique $x$-coordinates of points in $Z$, which we order as $x_1, \dots, x_m$.
            We show that by allowing for pathlets to use subsegments of the reversal $\rev{e}$ of $e$ as reference curves, we may restrict reference curves to be of the form $e[x_i, x_{i'}]$ or $\rev{e}[x_i, x_{i'}]$ to not lose much coverage.
            That is, the optimal $(2, \Delta')$-pathlet with such a reference curve covers at least one-fourth of what any other $(2, \Delta')$-pathlet using a subsegment of $e$ as a reference curve covers.

            The remainder of our subedge pathlet construction algorithm follows the same procedure as for vertex-to-vertex pathlets, though with the following optimization.
            We consider every $x_i$ separately, for $i \in [m]$.
            However, rather than considering all reference curves $e[x_i, x_{i'}]$, of which there are $m-i$, we consider only $\bigO(\log (m-i))$ reference curves.
            The main observation is that we may split a pathlet $(e[x_i, x_{i'}], \I)$ into two: $(e[x_i, x_{i+2^j}], \I_1)$ and $(e[x_{i'-2^j}, x_{i'}], \I_2)$, for some $j \leq \log (m-i)$.
            One of the two pathlets covers at least half of what $(e[x_i, x_{i'}], \I)$ covers, so an optimal $(2, \Delta')$-pathlet $(e[x_i, x_{i+2^j}], \I)$ that is defined by critical points covers at least one-eighth of any other subedge $(2, \Delta')$-pathlet $(e[x, x'], \I')$.

            For every $i \in [m]$, we let $Z_i \subseteq Z$ be the subset of critical points with $x$-coordinate equal to $x_i$ or $x_{i+2^j}$ for some $j \leq \log (m-i)$.
            We construct the reachability graph $G(e, T, Z_i)$, which has $\bigO(n \log^2 n)$ complexity.
            We then proceed as with the vertex-to-vertex pathlets, using a sweepline through the reachability graph.
            Doing this for all $i$ (and remembering the optimal pathlet) thereby takes $\bigO(n^2 \log^3 n)$ total time and $\bigO(n \log^2 n)$ space.
            Taken over all edges of $S$, we obtain a subedge pathlet in $\bigO(n^3 \log^3 n)$ time and $\bigO(n \log^2 n)$ space.

\section{Pathlet-preserving simplifications}
\label{sec:simplification}

    We first aim to limit our attention to $(\ell, 4\Delta)$-pathlets $(P, \I)$ whose reference curves $P$ are subcurves of some universal curve $S$.
    This way, we may design an algorithm that considers all subcurves of $S$, rather than all curves in $\R^d$.
    This has the additional benefit of allowing the use of the free space diagram $\FSD[4\Delta](S, T)$ to construct pathlets, as seen in~\cref{fig:pathlets_in_free_space}.

    \begin{figure}
        \centering
        \includegraphics[page=2]{Subtrajectory_clustering.pdf}
        \caption{ \textbf{Top left:} A simplification $S$ (red) of the trajectory $T$ (blue).
        \textbf{Right:} The diagram $\FSD(S, T)$ in white.
        The obstacles of the diagram are colored in gray.
        The clustering (bottom left)  corresponds to a set of colored bimonotone paths, where paths of a given color are horizontally aligned, and the paths together span the entire vertical axis.}
        \label{fig:pathlets_in_free_space}
    \end{figure}
    
    For any $(\ell, \Delta)$-pathlet $(P, \I)$ there exists an $(n, 2 \Delta)$-pathlet $(P', \I)$ where $P'$ is a subcurve of $T$. Indeed, consider any interval $[a, b] \in \I$ and choose $P' = T[a, b]$. However, restricting the subcurves of $T$ to have complexity at most $\ell$ may significantly reduce the maximum coverage, see for example \cref{fig:optimum_arbitrary_complexity}.
    \begin{figure}[b]
        \centering
        \includegraphics{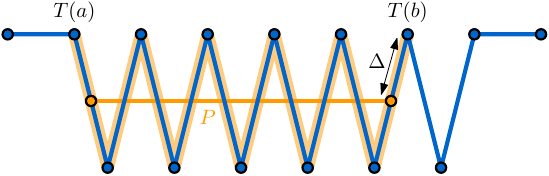}
        \caption{There exists a segment $P$ where $\dF(P, T[a, b]) \leq \Delta$. 
        In contrast, for any vertex-restricted $S$ with $\dF(T[a, b], S) \leq \Delta$, the complexity of $S$ is $\Theta(|T[a, b]|)$.}
        \label{fig:optimum_arbitrary_complexity}
    \end{figure}
    Instead of restricting pathlets to be subcurves of $T$, we restrict them to be subcurves of a different curve $S$. We enforce the following property: 
        
    \begin{definition}
    \label{def:simplification}
        For a trajectory $T$ and value $\Delta \geq 0$, a pathlet-preserving simplification is a curve $S$ together with a $2\Delta$-matching $(f, g)$, where for any subtrajectory $T[a, b]$ of $T$ and all curves $P$ with $\dF(P, T[a, b]) \leq \Delta$, the subcurve $S[s, t]$ matched to $T[a, b]$ by $(f, g)$ has complexity $|S[s, t]| \leq |P| + 2 - |\N \cap \{s, t\}|$.
    \end{definition}

    \begin{theorem}
        \label{thm:property_S}
        Let $(S, f, g)$ be a pathlet-preserving simplification of $T$. 
        For any $(\ell, \Delta)$-pathlet $(P, \I)$, there exists a subcurve $S[s, t]$ such that $(S[s, t], \I)$ is an $(\ell +2 - |\N \cap \{s, t\}|, 4\Delta)$-pathlet.
    \end{theorem}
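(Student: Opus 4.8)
The plan is to fix one representative interval of $\I$, use the subcurve of $S$ that $(f,g)$ matches to it as the common reference curve $S[s,t]$, and then show that this single choice works for \emph{every} interval of $\I$ by chaining triangle inequalities for the \f distance. If $\I = \emptyset$ the statement is trivial, since then $(S[1,1], \I)$ is an $(\ell+2-|\N\cap\{1\}|, 4\Delta)$-pathlet (its reference curve has one vertex and $1 \le \ell+1$), so assume $\I \neq \emptyset$ and pick an arbitrary $[a_0,b_0] \in \I$. Let $S[s,t]$ be the subcurve of $S$ matched to $T[a_0,b_0]$ by $(f,g)$; this is the curve we will build the pathlet with.

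For the complexity bound I would apply \cref{def:simplification} directly. Since $(P,\I)$ is an $(\ell,\Delta)$-pathlet, $\dF(P, T[a_0,b_0]) \le \Delta$ and $|P| \le \ell$. Definition~\ref{def:simplification}, instantiated with the subtrajectory $T[a_0,b_0]$ and the curve $P$, then yields $|S[s,t]| \le |P| + 2 - |\N \cap \{s,t\}| \le \ell + 2 - |\N \cap \{s,t\}|$, which is exactly the complexity required by the theorem.

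For the distance bound I would first note that $(f,g)$, restricted to the sub-interval of $[0,1]$ on which its reparameterizations sweep $S$ across $S[s,t]$ and $T$ across $T[a_0,b_0]$ (and rescaled to $[0,1]$), is a matching of $S[s,t]$ and $T[a_0,b_0]$ whose cost is at most that of $(f,g)$, i.e. at most $2\Delta$; hence $\dF(S[s,t], T[a_0,b_0]) \le 2\Delta$. Now take any $[a,b] \in \I$. Since $\dF(P, T[a_0,b_0]) \le \Delta$ and $\dF(P, T[a,b]) \le \Delta$, the triangle inequality gives $\dF(T[a_0,b_0], T[a,b]) \le 2\Delta$, and a second application gives $\dF(S[s,t], T[a,b]) \le \dF(S[s,t], T[a_0,b_0]) + \dF(T[a_0,b_0], T[a,b]) \le 4\Delta$. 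As $[a,b] \in \I$ was arbitrary, $(S[s,t], \I)$ meets the distance constraint of a $4\Delta$-pathlet, and combined with the complexity bound it is an $(\ell + 2 - |\N \cap \{s,t\}|, 4\Delta)$-pathlet, as claimed.

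I do not anticipate a genuine obstacle: the argument is essentially two triangle inequalities plus the complexity guarantee of the simplification. The only points that need care are (i) the standard fact that restricting a matching to a sub-parameter-interval again yields a valid matching of no larger cost, which is what lets us pass from the $2\Delta$-matching $(f,g)$ to $\dF(S[s,t], T[a_0,b_0]) \le 2\Delta$, and (ii) the degenerate case $\I = \emptyset$, handled above.
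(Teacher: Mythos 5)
Your proof is correct and follows essentially the same route as the paper's: fix a representative interval $[a_0,b_0]\in\I$, take the matched subcurve $S[s,t]$, get the complexity bound directly from Definition~\ref{def:simplification}, and chain two triangle inequalities (via $P$ and via $T[a_0,b_0]$) to get $\dF(S[s,t],T[a,b])\le 4\Delta$ for every $[a,b]\in\I$. The only additions are the explicit treatment of $\I=\emptyset$ and the remark that restricting a matching does not increase its cost, both of which the paper leaves implicit.
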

    \begin{proof}
            Consider any $(\ell, \Delta)$-pathlet $(P, \I)$ and choose some interval $[a, b] \in \I$. 
            For all $[c, d] \in \I$, via the triangle inequality, $\dF(T[a, b], T[c, d]) \leq 2 \Delta$.
            Let $S[s, t]$ be the subcurve of $S$ matched to $T[a, b]$ by $(f, g)$.
            Naturally, $\dF(S[s, t], T[a, b]) \leq 2\Delta$, and so by the triangle inequality $\dF(S[s, t], T[c, d]) \leq 4\Delta$.
            By the definition of a pathlet-preserving simplification, we obtain that for every curve $P'$ with $\dF(P', T[a, b]) \leq \Delta$, we have $|P'| \geq |S[s, t]| - 2 + |\N \cap \{s, t\}|$.
            In particular, setting $P' \gets P$ implies that $|S[s, t]| \leq \ell + 2 - |\N \cap \{s, t\}|$.
            Thus $(S[s, t], \I)$ is an $(\ell + 2 - |\N \cap \{s, t\}|, 4\Delta)$-pathlet.
    \end{proof}
        
\subparagraph*{Prior simplifications.}
    The curve $S$ that we construct is a \emph{curve-restricted $\alpha \Delta$-simplification} of $T$; a curve whose vertices lie on $T$, where for every edge $s = \overline{T(a) T(b)}$ of $S$ we have $\dF(s, T[a, b]) \leq \alpha \Delta$. 
    Various $\alpha \Delta$-simplification algorithms have been proposed~\cite{guibas93minimum_link, agarwal_near-linear_2005, de_berg_fast_2013, van2018global}.
    
    If $T$ is a curve in $\mathbb{R}^2$, Guibas~\etal~\cite{guibas93minimum_link} provide an $\bigO(n \log n)$ time algorithm that constructs a $2 \Delta$-simplification $S$ for which there is no $\Delta$-simplification $S'$ with $|S'| < |S|$.
    Their algorithm is not efficient in higher dimensions however.
    
    Agarwal~\etal~\cite{agarwal_near-linear_2005} also construct a $2\Delta$-simplification $S$ of $T$ in $\bigO(n \log n)$ time. This was applied by Akitaya~\etal~\cite{bruning_subtrajectory_2023} for their subtrajectory clustering algorithm under the discrete \f distance. 
    The simplification $S$ has a similar guarantee as the simplification of~\cite{guibas93minimum_link}: there exists no \emph{vertex-restricted} $\Delta$-simplification $S'$ with $|S'| < |S|$.
    This guarantee is weaker than that of~\cite{guibas93minimum_link}, as vertex-restricted simplifications are simplifications formed by taking a subsequence of vertices of $T$ as the vertices of the simplification.
    It can, however, be constructed efficiently in higher dimensions.
    
    As we show in~\cref{fig:optimum_arbitrary_complexity}, the complexity of a vertex-restricted $\Delta$-simplification can be arbitrarily bad compared to the (unrestricted) $\Delta$-simplification with minimum complexity. 
    Br\"{u}ning~\etal~\cite{bruning_faster_2022} note that for the subtrajectory problem under the continuous \f distance, one requires an $\alpha \Delta$-simplification whose complexity has guarantees with respect to the optimal (unrestricted) simplification.
    They present a $3 \Delta$-simplification $S$ (whose definition was inspired by de Berg, Gudmundsson and Cook~\cite{de_berg_fast_2013}) with the following property:
    for any subcurve $T[a, b]$ of $T$ within \f distance $\Delta$ of some line segment, there exists a subcurve $S[s, t]$ of $S$ with complexity at most $4$ that has \f distance at most $3\Delta$ to $T[a, b]$.
    Thus, there exists no $\Delta$-simplification $S'$ with $|S'| < |S|/2$. 
        
\subparagraph*{Our new curve simplification.}
    In Definition~\ref{def:simplification} we presented yet another curve simplification under the \f distance for curves in $\R^d$.  
    Our simplification has a stronger property than the one that is realized by Br\"{u}ning~\etal~\cite{bruning_faster_2022}: 
    for any subcurve $T[a,b]$ and \emph{any} curve $P$ with $\dF(P, T[a, b]) \leq \Delta$, we require that there exists a subcurve $S[s, t]$ with $\dF(S[s, t], T[a, b]) \leq 2\Delta$ that has at most two more vertices than $P$.
    This implies both the property of Br\"{u}ning~\etal~\cite{bruning_faster_2022} and ensures that no $\Delta$-simplification $S'$ exists with $|S'| < |S| - 2$.

    In Appendix~\ref{app:constructing_pathlet_preserving_simplification}, we provide an efficient algorithm for constructing pathlet-preserving simplifications.
    We relegate this section to the appendix to not distract from the main storyline.
    The algorithm is an extension of the vertex-restricted simplification of Agarwal~\etal~\cite{agarwal_near-linear_2005} to construct a curve-restricted simplification instead.
    For this, we use the techniques of Guibas~\etal~\cite{guibas93minimum_link} to quickly identify if an edge of $T$ is suitable to place a simplification vertex on.
    We combine this check with the algorithm of~\cite{agarwal_near-linear_2005} and obtain:

    \begin{restatable}{theorem}{thmSimplification}
    \label{thm:simplification}
        For any trajectory $T$ with $n$ vertices and any $\Delta \geq 0$, we can construct a pathlet-preserving simplification $S$ in $\bigO(n \log n)$ time. 
    \end{restatable}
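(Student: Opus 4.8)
The plan is to construct $S$ as a \emph{curve-restricted} greedy simplification of $T$: its vertices lie on $T$ (possibly in the interior of edges of $T$), and we grow $S$ one edge at a time, each edge made as long as possible while staying within \f distance $2\Delta$ of the subcurve of $T$ it spans. Having placed a vertex $S(i)=T(\beta)$ (starting from $S(1)=T(1)$), we scan forward along $T$ and maintain, in the style of Guibas~\etal~\cite{guibas93minimum_link}, the intersection of the wedges describing which segments out of $T(\beta)$ still lie within \f distance $2\Delta$ of the scanned prefix; here we use the Alt--Godau characterization~\cite{alt95continuous_frechet}, namely that $\overline{T(\beta)\,q}$ is within $2\Delta$ of $T[\beta,\gamma]$ iff it meets the balls of radius $2\Delta$ around $T(\lceil\beta\rceil),\dots,T(\lfloor\gamma\rfloor)$ in order and $\lVert q-T(\gamma)\rVert\le 2\Delta$. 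Watching when this feasible set collapses tells us the largest extent $\gamma$ and an accompanying point $T(\gamma)$ with $\dF(\overline{T(\beta)T(\gamma)},T[\beta,\gamma])\le 2\Delta$; we set $S(i{+}1)=T(\gamma)$ and continue until reaching $T(n)$. The $2\Delta$-matching $(f,g)$ is obtained by concatenating, edge by edge, the $2\Delta$-matchings between each edge $\overline{S(i)S(i{+}1)}$ and the subcurve of $T$ it was built from. It remains to verify the complexity property of \cref{def:simplification} and the running time.

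For the complexity property, fix a subtrajectory $T[a,b]$ and any curve $P$ with $\dF(P,T[a,b])\le\Delta$; write $m=|P|$ and let $S[s,t]$ be the subcurve matched to $T[a,b]$. A \f matching of $P$ to $T[a,b]$ gives breakpoints $a=c_1\le c_2\le\cdots\le c_m=b$ with each edge $P[i,i{+}1]$ matched within $\Delta$ to $T[c_i,c_{i+1}]$ and $\lVert P(i)-T(c_i)\rVert\le\Delta$. The crucial point is that for every $[\beta,\gamma]\subseteq[c_i,c_{i+1}]$ one has $\dF(\overline{T(\beta)T(\gamma)},T[\beta,\gamma])\le 2\Delta$: the subcurve $T[\beta,\gamma]$ is matched within $\Delta$ to a subsegment of $P[i,i{+}1]$ whose endpoints lie within $\Delta$ of $T(\beta)$ and $T(\gamma)$, so composing with the linear $\Delta$-matching between that subsegment and $\overline{T(\beta)T(\gamma)}$ gives the claim by the triangle inequality. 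Hence whenever $S$ has a vertex $T(\beta)$ with $\beta\in[c_i,c_{i+1}]$, the maximal edge of $S$ out of $T(\beta)$ reaches at least $c_{i+1}$; a short case analysis then shows that each closed interval $[c_i,c_{i+1}]$ contains at most two vertices of $S$, and that if it contains two, the later of them is $T(c_{i+1})$. Charging each vertex of $S$ lying in the portion of $T$ matched to $[a,b]$ to the largest $i$ with that vertex in $[c_i,c_{i+1}]$, every interval is charged at most once except the last, which may be charged twice, so the matched portion contains at most $(m-2)+2=m$ vertices of $S$. Since the vertices of $S[s,t]$ are exactly those together with the endpoints of the (possibly partial) first and last edges, $|S[s,t]|$ equals that count plus $2-|\N\cap\{s,t\}|$ --- the term $|\N\cap\{s,t\}|$ appears because $s$ (resp.\ $t$) is an integer exactly when $a$ (resp.\ $b$) is a vertex of $S$ --- and therefore $|S[s,t]|\le m+2-|\N\cap\{s,t\}|=|P|+2-|\N\cap\{s,t\}|$, which is the condition of \cref{def:simplification}.

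For the running time I would follow the efficient implementation of Agarwal~\etal~\cite{agarwal_near-linear_2005}: rather than re-scanning $T$ for each edge of $S$, locate the endpoint of each greedy edge by an exponential search followed by a binary search over the not-yet-consumed vertices of $T$, maintaining the Guibas wedge intersection incrementally; as consecutive edges of $S$ overlap only at a point, each vertex of $T$ is processed a constant number of times beyond the $\bigO(\log n)$ search overhead, for $\bigO(n\log n)$ total. I expect the main obstacles to be, first, arguing that each greedy edge is genuinely \emph{maximal and realizable} --- the set of feasible extents $\gamma$ need be neither a single interval nor closed, so one must use the Alt--Godau/Guibas wedge structure to certify that a well-defined maximal edge attained by an actual point of $T$, with an explicit $2\Delta$-matching, exists and can be computed (this is exactly where \cite{guibas93minimum_link} is needed beyond the vertex-restricted setting of \cite{agarwal_near-linear_2005}) --- and second, the boundary bookkeeping of $(f,g)$ around $a$ and $b$, where a single point of $T$ can be matched to an entire segment of $S$, so one must define $[s,t]$ (and, if desired, choose $(f,g)$) so that the surplus is exactly $2-|\N\cap\{s,t\}|$.
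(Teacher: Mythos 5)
Your proposal takes essentially the same route as the paper: the same curve-restricted greedy $2\Delta$-simplification (jumping to a rightmost endpoint of a connected component of the set $\B(a)$), the same key triangle-inequality observation that a subcurve $T[\beta,\gamma]$ matched entirely to one edge of $P$ satisfies $\dF(\overline{T(\beta)T(\gamma)},T[\beta,\gamma])\le 2\Delta$, and the same $\bigO(n\log n)$ implementation via exponential/binary search over the Guibas~\etal stabbing-wedge intersections. The only presentational difference is that you bound the vertex count by a direct charging over the matching intervals $[c_i,c_{i+1}]$, where the paper argues by contradiction (pigeonhole on an edge of $S$ trapped strictly inside one such interval); these are equivalent, and the ``obstacles'' you flag at the end are precisely the details the paper works out with its stabbing-wedge lemmas and the definition of $\B(a)$.
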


\section{The universe $\U$ and greedy set cover}
\label{sec:set_cover}

 Subtrajectory clustering is closely related to the \emph{set cover} problem.
    In this problem, we have a discrete universe $\U$ and a family of sets $\mathcal{S}$ in this universe, and the goal is to pick a minimum number of sets in $\mathcal{S}$ such that their union is the whole universe.
    The decision variant of set cover is NP-complete~\cite{karp72reducibility}.
    However, the following greedy strategy gives an $\bigO(\log |\U|)$ approximation of the minimal set cover size~\cite{chvatal79greedy}.
    Suppose we have picked a set $\hat{\mathcal{S}} \subseteq \mathcal{S}$ that does not yet cover all of $\U$.
    The idea is then to add a set $S \in \mathcal{S}$ that maximizes $|S \cap (\U \setminus \bigcup \hat{\mathcal{S}})|$, and to repeat the procedure until $\U$ is fully covered.

    \subparagraph{Defining the universe $\U$.}
    We apply this greedy strategy to subtrajectory clustering, putting the focus on constructing a pathlet that covers the most of some universe $\U$.
    For subtrajectory clustering, the universe is, in principle, infinite.
    We therefore first define a discrete universe $\U$ consisting of $\bigO(n^3)$ intervals that together cover $[1, n]$.
    We choose this universe carefully, as an optimal covering of $\U$ with pathlets must have roughly the same size as an optimal covering of $[1, n]$.
    We define $\U$ using the following set of \emph{critical points} in $\FSD(S, T)$:

    \begin{definition}
    \label{def:critical_points}
        For $i \in [|S|-1]$ and $j \in [n-1]$, consider their corresponding cell (the area $[i, i+1] \times [j, j+1]$) and the following six extreme points:       
        \begin{itemize}
            \item A leftmost point of $\FSD(S, T) \cap ([i, i+1] \times [j, j+1])$,
            \item A rightmost point of $\FSD(S, T) \cap ([i, i+1] \times [j, j+1])$,
            \item The leftmost and rightmost points of $\FSD(S, T) \cap ([i, i+1] \times \{j\})$, and
            \item The leftmost and rightmost points of $\FSD(S, T) \cap ([i, i+1] \times \{j+1\})$.
        \end{itemize}
        Let $X_{i, j}$ be the set of corresponding $x$-coordinates and $X := \bigcup\limits_{i, j} X_{i, j}$.
        For each $x \in X$, we call  every point $(x, y)$ that is an endpoint of a connected component (vertical segment) of $\FSD(S, T) \cap (\{x\} \times [1, n])$ a \emph{critical point}.
    \end{definition}

    \begin{definition}
        Let $Y^*$ be the set of critical points, sorted by their $y$-coordinate. 
        We define the set $\U$ as the set of intervals in $[1, n]$ between two consecutive $y$-coordinates in $X$.
        Since there are at most $6n$ critical points in $[i, i+1] \times [j, j+1]$ for each $i \in [|S|-1]$ and $j \in [n-1]$, it follows that $|\U| \leq 6n^3-1 = \bigO(n^3)$. 
    \end{definition}

    \begin{lemma}
    \label{lem:constructing_universe}
        We can construct $\U$ in $\bigO(n^3)$ time.
    \end{lemma}
    \begin{proof}
        Fix an integer $i \in [|S|-1]$.
        We compute the critical points inside the cells $[i, i+1] \times [j, j+1]$, for all $j \in [n-1]$, in $\bigO(n^2)$ time altogether.
        For this, we compute the sets $X_{i, j}$ of~\cref{def:critical_points} in $\bigO(1)$ time each.
        Let $X_i = \bigcup_j X_{i, j}$.
        Then, we compute the intersections of each vertical line $\{x\} \times [1, n]$, for $x \in X_i$, in $\bigO(n)$ time each.
        The critical points inside the cells $[i, i+1] \times [j, j+1]$, for $j \in [n-1]$, are the endpoints of connected components of these intersections, and can be computed in $\bigO(n)$ time per line, totalling $\bigO(n^2)$ time.
        Summing over all integers $i$ completes the proof.
    \end{proof}

    \subparagraph{Applying greedy set cover.}
    In the remainder of this paper, we let $\U$ denote this discrete universe.
    We generalize the analysis of the greedy set cover argument to pathlets that cover a (constant) fraction of what the optimal pathlet covers.
    This relaxes the requirements on the pathlets and helps reduce complexity of the problem.
    For this, we introduce the following:

    \begin{definition}[Maximal pathlets]
        Given a set $C$ of pathlets, a \emph{$(\Delta, \frac{1}{c})$-maximal} $(\ell, \Delta')$-pathlet $(P', \I')$ is a pathlet such that there exists no $(\ell, \Delta)$-pathlet $(P, \I)$ with 
        \[
            \frac{1}{c} | \Cov(P, \I) \setminus \Cov(C) |  \geq | \Cov(P', \I') \setminus \Cov(C) |.
        \]
    \end{definition}

    \noindent
    In~\cref{lem:optimalanswer}, we show that if we keep greedily selecting $(\Delta, \frac{1}{c})$-maximal pathlets for our clustering, the size of the clustering stays relatively small compared to the optimum size.
    The bound closely resembles the bound obtained by the argument for greedy set cover.

    \begin{lemma}
    \label{lem:optimalanswer}
        Iteratively adding $(\Delta, \frac{1}{c})$-maximal pathlets yields a clustering of size at most $3c \cdot k_\ell(\Delta) \ln (6n) + 1$.
    \end{lemma}
    \begin{proof}
        Let $C^* = \{ (P_i, \I_i) \}_{i=1}^k$ be an $(\ell, \Delta)$-clustering of $T$ of minimal size. 
        Then $k := |C^*| = k_{\ell}(\Delta)$. 
        Consider iteration $j$ of the algorithm, where we have some set of $(\ell, \Delta')$-pathlets $C_j$.
        Denote by $W_j = |\U| \setminus \Cov(C_j)$ the ``size'' of the part of the universe that still needs to be covered.
        Since $C^*$ covers $\U$, it must cover $\U \setminus \Cov(C_j)$. 
        It follows via the pigeonhole principle that there is at least one $(\ell,\Delta)$-pathlet $(P_i, \I_i) \in C^*$ that covers at least $W_j / k$ intervals in $\Cov(P_i, \I_i) \setminus \Cov(C_j)$.
        Per definition of being $(\Delta, \frac{1}{c})$-maximal, our greedy algorithm finds a pathlet $(P_j, \I_j)$ that covers at least $\frac{W_j}{ck}$ uncovered intervals.
        Thus:
        \[
            W_{j+1} = |\U| - | \Cov(C_j) \cup \Cov(P_j, \I_j) |
            \leq W_j - \frac{W_j}{c \cdot k} = W_j \cdot ( 1 - \frac{1}{c \cdot k} ).
        \]
        
        We have that $W_0 = |\U|$.
        Suppose it takes $k'+1$ iterations to cover all of $T'$ with the greedy algorithm.
        Then before the last iteration, at least one edge of $T'$ remained uncovered.
        That is, $|\U| \cdot \left( 1-\frac{1}{c \cdot k} \right)^{k'} \geq 1$.
        We apply that $e^x \geq 1+x$ for all real $x$ to obtain: 
        \[
            \frac{1}{e} \geq \left( 1-\frac{1}{x} \right)^x
        \]
        for all $x \geq 1$.
        Plugging in $x \gets c \cdot k$, it follows that
        \[
            1 \leq |\U| \cdot \left( 1 - \frac{1}{c \cdot k} \right)^{k'}
            = |\U| \cdot \left( 1 - \frac{1}{c \cdot k} \right)^{c \cdot k \cdot \frac{k'}{c \cdot k}}
            \leq |\U| \cdot e^{-\frac{k'}{c \cdot k}}.
        \]
        Hence $e^{\frac{k'}{c \cdot k}} \leq |\U|-1$, showing that $k' \leq c \cdot k \ln (|\U|-1)$.
        Thus after $k'+1 \leq c \cdot k_\ell(\Delta) \ln (|\U|-1) + 1$ iterations, all of $T'$, and therefore $T$, is covered.
        Using that $|\U| \leq 6n^3-1$ completes the proof.
    \end{proof}

\section{Subtrajectory clustering}
\label{sec:the_algorithm}

    In this section we present our algorithm for subtrajectory clustering.
    We first restrict our attention to reference curves of two types.
    
    Recall that using the pathlet-preserving simplification $S$ of $T$, we may already restrict our attention to reference curves that are subcurves of $S$.
    Still, the space of possible reference curves remains infinite.
    We wish to discretize this space by identifying certain finite classes of reference curves that contain a ``good enough'' reference curve, i.e., one with which we can construct a pathlet that is $(\Delta, \frac{1}{c})$-maximal for some small constant $c$.

    We distinguish between two types of pathlets, based on their reference curves (note that not all pathlets fit into a class, and that some may fit into both classes):
    \begin{enumerate}
        \item Vertex-to-vertex pathlets: pathlets $(P, \I)$ where $P$ is a vertex subcurve of $S$.
        \item Subedge pathlets: pathlets $(P, \I)$ where $P$ is a subsegment of an edge of~$S$.
    \end{enumerate}

    \begin{figure}
        \centering
        \includegraphics[page=3]{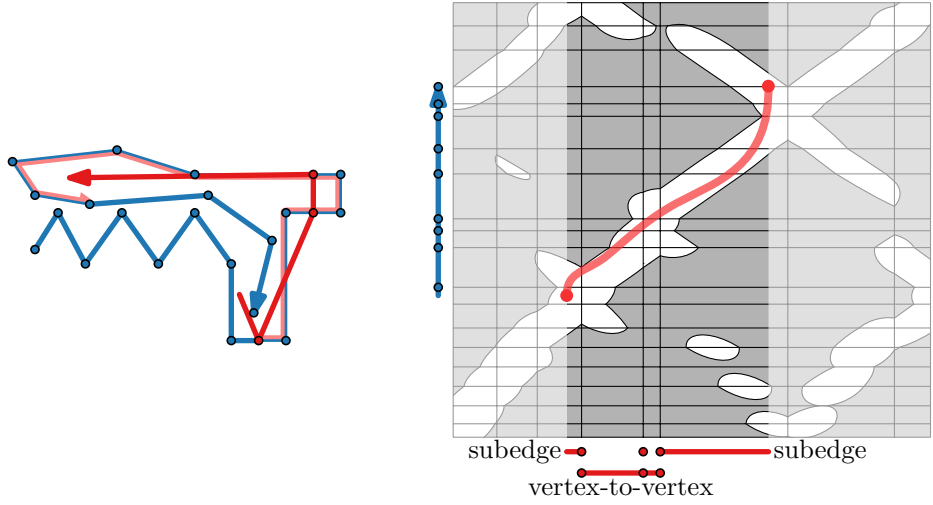}
        \caption{A pathlet (left), corresponding to the red $\Delta'$-matching (right), gets split into a vertex-to-vertex and two subedge pathlets.
        The new pathlets correspond to the parts of the red matching that are vertically above the part of the $x$-axis corresponding to the new reference curve.
        }
        \label{fig:splitting_pathlets}
    \end{figure}

\noindent
    We construct pathlets of the above types, ensuring that they all cover at least some constant fraction of the optimal coverage for pathlets of the same type.
    Let $(P_\ver, \I_\ver)$ and $(P_\sub, \I_\sub)$ respectively be a vertex-to-vertex and subedge $(\ell, \Delta')$-pathlet, that respectively cover at least a factor $\frac{1}{c_\ver}$ and $\frac{1}{c_\sub}$ of an optimal pathlet of the same type.
    We show that one of these two pathlets is a $(\Delta, \frac{1}{c})$-maximal pathlet, for $c = c_\ver + 2c_\sub$.
    For intuition, refer to~\cref{fig:splitting_pathlets}.

    \begin{restatable}{lemma}{lemSplittingPathlets}
    \label{lem:quality_of_pathlet}
        Given a collection $C$ of pathlets, let
        \[
            (P, \I) \in \{(P_\ver, \I_\ver), (P_\sub, \I_\sub)\}
        \]
        be a pathlet with maximal coverage among the uncovered points.
        Then $(P, \I)$ is $(\Delta, \frac{1}{c})$-maximal with respect to $C$, for $c = c_\ver + 2c_\sub$.
    \end{restatable}

    \begin{proof}
        Let $(P^*, \I^*)$ be an optimal $(\ell, \Delta)$-pathlet.
        By~\cref{thm:property_S}, there exists a subcurve $S[x, x']$ of $S$ such that $(S[x, x'], \I^*)$ is a $(\ell + 2 - |\N \cap \{x, x'\}|, \Delta')$-pathlet.
        Suppose first that $S[x, x']$ is a subsegment of an edge of $S$, making $(S[x, x'], \I^*)$ a subedge pathlet with $|S[x, x']| \leq 2$.
        In this case, the coverage of $(P_\sub, \I_\sub)$ is at least $\frac{1}{c_\sub}$ times the coverage of $(S[x, x'], \I^*)$ over the uncovered points.
        Hence $(P_\sub, \I_\sub)$ is $(\Delta, \frac{1}{c_\sub})$-maximal.
        Since the pathlet $(P, \I)$ has at least as much coverage as $(P_\sub, \I_\sub)$, it must also be $(\Delta, \frac{1}{c_\sub})$-maximal.

        Next suppose that $S[x, x']$ is not a subsegment of an edge of $S$, meaning $S[x, x']$ contains at least one vertex of $S$.
        In this case, we split $S[x, x']$ into three subcurves:
        \begin{itemize}
            \item A suffix $P_\suf = S[x, \ceil{x}]$ of an edge,
            \item A vertex subcurve $P_\ver = S[\ceil{x}, \floor{x'}]$, and
            \item A prefix $P_\pre = S[\floor{x'}, x']$ of an edge.
        \end{itemize}
        Observe that every subcurve has at most $\ell$ vertices.
        The suffix and prefix both trivially have at most $2 \leq \ell$ vertices.
        The vertex subcurve has at most the number of vertices of $S[x, x']$, but if $x$, respectively $x'$, is not an integer, then the vertex subcurve loses a vertex compared to $S[x, x']$.
        That is, the vertex subcurve has at most
        \[
            \ell + 2 - |\N \cap \{x, x'\}| - |\{x, x'\} \setminus \N| = \ell + 2 - |\{x, x'\}| = \ell \quad \textnormal{ vertices. }
        \]
        
        Since every interval $[y, y'] \in \I^*$ corresponds to a $\Delta'$-matching $M$ between $S[x, x']$ and $T[y, y']$, we can decompose $[y, y']$ into three intervals $[y, y_1]$, $[y_1, y_2]$ and $[y_2, y']$, such that $M$ decomposes into three $\Delta'$-matchings, one between $P_\suf$ and $T[y, y_1]$, one between $P_\ver$ and $T[y_1, y_2]$, and one between $P_\pre$ and $T[y_2, y']$.
        By decomposing all intervals in $\I^*$ in this manner, we obtain that there are three $(\ell, \Delta)$-pathlets $(P_\suf, \I^*_\suf)$, $(P_\ver, \I^*_\ver)$ and $(P_\pre, \I^*_\pre)$ that together have the same coverage as $(P^*, \I^*)$.

        We have at least one of the following:
        \begin{itemize}
            \item $(P_\suf, \I^*_\suf)$ covers at least a factor $\frac{c_\sub}{c_\ver + 2c_\sub}$ of what $(P^*, \I^*)$ covers, or
            \item $(P_\ver, \I^*_\ver)$ covers at least a factor $\frac{c_\ver}{c_\ver + 2c_\sub}$ of what $(P^*, \I^*)$ covers, or
            \item $(P_\pre, \I^*_\pre)$ covers at least a factor $\frac{c_\sub}{c_\ver + 2c_\sub}$ of what $(P^*, \I^*)$ covers.
        \end{itemize}
        Regardless of what statement holds, the pathlet $(P, \I)$ covers at least a factor $\frac{1}{c_\ver + 2c_\sub}$ of what $(P^*, \I^*)$ covers.
        Thus we have that $(P, \I)$ is $(\Delta, \frac{1}{c_\ver + 2c_\sub})$-maximal.
    \end{proof}

    Next we combine the previous ideas on simplification and greedy algorithms and present our algorithm for subtrajectory clustering.
    The algorithm uses subroutines for constructing the two types of pathlets described above, as well as a data structure for comparing their coverages to select the best pathlet for the clustering.
    
    Our pathlet construction algorithms guarantee that $c_\ver = 1$ and $c_\sub = 8$.
    By~\cref{lem:quality_of_pathlet}, the pathlet with the most coverage is therefore $(\Delta, \frac{1}{c})$-maximal with respect to the uncovered points, for $c = 1 + 2 \cdot 8 = 17$.
    By~\cref{lem:optimalanswer}, the resulting $(\ell, \Delta')$-clustering has a size of at most $17k_\ell(\Delta) \ln (|\U|-1) + 1$.
    Since our universe $\U$ has size at most $6n^3-1$, the clustering has a size of at most $51k_\ell(\Delta) \ln (6n) + 1$.

\subparagraph*{A data structure for comparing pathlets.}
    Recall that we fixed some discrete universe $\U$ of $\bigO(n^3)$ intervals, and that we denote $\Cov(P, \I) = \Cov_\U(P, \I)$.
    In each iteration of our greedy algorithm, we select one of two pathlets whose  coverage is the maximum over $\U \setminus \Cov(C)$, given the current set of picked pathlets $C$.
    To compare the coverages of pathlets, we make use of binary search trees built on $\U$ and $\Cov(C)$:

    \begin{restatable}{lemma}{coverageDS}
    \label{lem:coverage_ds}
        In $\bigO(n^3 \log n)$ time, we can preprocess $\U$ and $\Cov(C)$ into a data structure of $\bigO(n^3)$ size, such that given a pathlet $(P, \I)$, the value $| \Cov(P, \I) \setminus \Cov(C) |$ can be computed in $\bigO(|\I| \log n)$ time.
    \end{restatable}
    \begin{proof}
        We make use of a general data structure for storing a set $\I$ of $m$ interior-disjoint intervals, such that given a query interval $I$, the number of intervals in $\I$ that are fully contained in $I$ can be reported efficiently.
        For the data structure, we store the (multiset of) endpoints of intervals in $\I$ in a balanced binary search tree.
        The tree uses $\bigO(m)$ space and is constructed in $\bigO(m \log m)$ time.
        
        We report the number of intervals in $\I$ contained in a query interval $I$ as follows.
        An interval $[a, b] \in \I$ is contained in $I$ if and only if both $a$ and $b$ are.
        Furthermore, there are $k' \leq 2$ intervals in $\I$ that $I$ intersects but does not contain.
        Thus, if $I$ contains $k$ endpoints stored in the binary search tree, then it contains $(k-k') / 2$ intervals of $\I$.
        We compute $k'$ by reporting the intervals of $\I$ containing the endpoints of $I$ in $\bigO(\log m)$ time.
        Computing $k$ and then reporting $(k-k') / 2$ takes an additional $\bigO(\log m)$ time.
        Thus we answer a query in $\bigO(\log m)$ time.

        We use the above data structure to efficiently compute $| \Cov(P, \I) \setminus \Cov(C) |$ for a query pathlet $(P, \I)$.
        For this, we preprocess both $\U$ and $\Cov(C)$ into the above data structure.
        Since $\Cov(C) \subseteq \U$ and $|\U| = \bigO(n^3)$, this takes $\bigO(n^3 \log n)$ time, and the data structures use $\bigO(n^3)$ space.
        With the two data structures, we report the values $| \Cov_\U(P, \I) \cap \U |$ and $| \Cov(P, \I) \cap \Cov(C) |$ in $\bigO(\log n)$ time.
        We then report
        \[
            | \Cov(P, \I) \setminus \Cov(C) | = | \Cov(P, \I) \cap \U | - | \Cov(P, \I) \cap \Cov(C) |.\qedhere
        \]
    \end{proof}

\subparagraph*{Asymptotic complexities.}

    Our algorithm iteratively constructs a set $C$ of $\bigO(k_\ell(\Delta) \log n)$ pathlets.
    Before we start constructing pathlets, we compute the universe $\U$ of $\bigO(n^3)$ intervals.
    This takes $\bigO(n^3)$ time~(\cref{lem:constructing_universe}).
    
    In each iteration, we construct the data structure of~\cref{lem:coverage_ds} on the universe $\U$ and current set of pathlets $C$.
    This takes $\bigO(n^3 \log n)$ time and uses $\bigO(n^3)$ space.
    Constructing the vertex-to-vertex pathlet then takes $\bigO(n^2 \ell \log^2 n)$ time and uses $\bigO(n \ell \log n)$ space~(\cref{thm:constructing_vertex-to-vertex}).
    The subedge pathlet takes $\bigO(n^3 \log^3 n)$ time and $\bigO(n \log^2 n)$ space to construct~(\cref{thm:constructing_subedge}).
    
    To decide which pathlet to use in the clustering, we make further use of the data structure of~\cref{lem:coverage_ds}.
    All constructed pathlets $(P, \I)$ have $|\I| \leq n$, and so we compute the coverages of the two pathlets in $\bigO(n \log n)$ time.
    By summing up all complexities, we derive our main theorem:
    
    \begin{theorem}
    \label{thm:yes-no-no}
        Given a trajectory $T$ with $n$ vertices, an integer $\ell \geq 2$, and  a value $\Delta \geq 0$, we can construct an $(\ell, 4\Delta)$-clustering of size at most $51 k_\ell(\Delta) \ln (6n) + 1$ in $\bigO(k_\ell(\Delta) n^3 \log^4 n)$ time and using $\bigO(n^3)$ space.
    \end{theorem}

\section{The reachability graph}
\label{sec:reachability_graph}

    Let $\Delta' = 4\Delta$.
    For any subcurve $W$ of $S$ and a set of points $Z$ in $\FSD(W, T)$ we define the \emph{reachability graph} $G(W, T, Z)$.
    The vertices of this graph are the set of points $Z$, together with some Steiner points in $[1, |W|] \times [1, |T|]$.
    The reachability graph $G(W, T, Z)$ is a directed graph where, for any two $\mu_1, \mu_2 \in Z$, there exists a directed path from $\mu_1$ to $\mu_2$ if and only if $\mu_2$ is reachable from $\mu_1$ in the free space $\FSD(W, T)$.
    
We define a reachability graph with $\bigO((n |W| + |Z|) \log (n|Z|))$ vertices and edges, and can be constructed in $\bigO((n |W| + |Z|) \log (n|Z|))$ time.

    \begin{restatable}{theorem}{thmReachabilityGraph}
    \label{thm:reachability_graph}
        Let $W$ be a subcurve of $S$ and $Z$ a set of points in $\FSD(W, T)$.
        The reachability graph $G(W, T, Z)$ has $\bigO((n |W| + |Z|) \log (n|Z|))$ vertices and edges, and can be constructed in $\bigO((n |W| + |Z|) \log (n|Z|))$ time.
    \end{restatable}

    \subparagraph{Constructing the graph.}
    \Cref{lem:simplified_FSD} shows that when focusing on reachability between points in $\FSD(W, T)$, we can simplify obstacles of the free space diagram to the parameter space grid, minus the free space.
    See~\cref{fig:polygon}.
   
    These simplified obstacles can be represented in $\bigO(|W| n)$ time as a set of horizontal and vertical line segments (whose endpoints are not included, except possibly some that meet the boundary of $[1, |W|] \times [1, |T|]$).
    The complement of these segments in the parameter space $[1, |W|] \times [1, |T|]$ gives a rectilinear polygon with rectilinear holes $\mathcal{R}$.

    \begin{figure}
        \centering
        \includegraphics{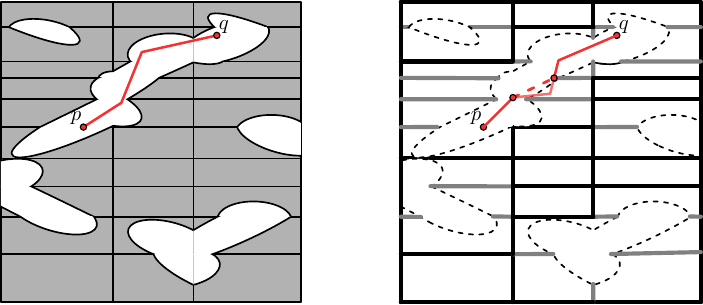}
        \caption{(left) The $\Delta'$-free space diagram of $W$ and $T$ with points $p$ and $q$ connected by a bimonotone path. 
        (right) The obstacles of $\mathcal{R}$ are made up of all grid edges that are entirely contained in the obstacles of $\FSD(W, T)$ (shown in black) plus the gray segments.
        We may transform any bimonotone path between $p$ and $q$ into one that lies in $\FSD(W, T)$.
        }
        \label{fig:polygon}
    \end{figure}

    \begin{lemma}
    \label{lem:simplified_FSD}
        Let $p$ and $q$ be two points in $\FSD(W, T)$.
        There is a bimonotone path from $p$ to $q$ in $\FSD(W, T)$ if and only if there is a bimonotone path from $p$ to $q$ in $\mathcal{R}$.
    \end{lemma}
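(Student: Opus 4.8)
The plan is to use that $\mathcal R$ differs from $\FSD(W,T)$ only in the \emph{interiors} of grid cells: every obstacle of $\mathcal R$ lies on a grid line, so $\mathcal R$ is the parameter space $[1,|W|]\times[1,|T|]$ with only certain grid-line points deleted. Two elementary containments follow, and they carry the whole argument. First, $\FSD(W,T)\subseteq\mathcal R$: a point of the free space lies in no obstacle of $\FSD(W,T)$, hence in none of the reduced (grid-line) obstacles of $\mathcal R$. Second, every grid-line point of $\mathcal R$ lies in $\FSD(W,T)$: if such a point were outside $\FSD(W,T)$ it would lie in some obstacle, and therefore in that obstacle's intersection with the grid lines, i.e., in an obstacle of $\mathcal R$, a contradiction.

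The forward implication is then immediate, since a bimonotone path in $\FSD(W,T)$ is also a bimonotone path in the superset $\mathcal R$.

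For the converse, I would take a bimonotone path $\pi$ from $p$ to $q$ in $\mathcal R$ and reroute the portions of $\pi$ that stray into the parts of obstacles lying strictly inside cells. The first step is the observation that a bimonotone path can meet the interior of any fixed grid cell on only one closed subinterval of its domain: once a coordinate passes an integer boundary it never returns, so $\pi$ cannot re-enter a cell it has left. Hence $\pi$ decomposes into finitely many maximal pieces, each contained in the closure $\overline\sigma$ of a single cell $\sigma$, separated by arcs running along grid lines; on those arcs $\pi$ already lies in $\FSD(W,T)$ by the second containment. Now consider one in-cell piece, with endpoints $a$ and $b$. Each of $a,b$ is either an endpoint of $\pi$ — hence equal to $p$ or $q$, which lie in $\FSD(W,T)$ by hypothesis — or lies on $\partial\sigma$, hence on a grid line, hence in $\FSD(W,T)$. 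So $a,b\in\FSD(W,T)\cap\overline\sigma$. This set is convex: on $\overline\sigma$ both $W$ and $T$ are affine, so $\FSD(W,T)\cap\overline\sigma=\{(x,y)\in\overline\sigma : \lVert W(x)-T(y)\rVert\le\Delta'\}$ is the intersection of the square $\overline\sigma$ with the sublevel set of a convex function. Moreover, bimonotonicity of $\pi$ forces $a\le b$ coordinatewise, so the straight segment $\overline{ab}$ stays inside $\FSD(W,T)\cap\overline\sigma$ and is itself bimonotone. Replacing every in-cell piece of $\pi$ by its segment yields a path from $p$ to $q$ inside $\FSD(W,T)$; since each replacement segment is coordinatewise non-decreasing and the junction points are coordinatewise non-decreasing along $\pi$, the concatenation is bimonotone. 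See \cref{fig:polygon} for the picture.

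The only mildly delicate point I expect is justifying that a bimonotone path enters each cell's interior at most once, which is what makes the decomposition into finitely many in-cell pieces clean; the remainder reduces to the two set containments above and the per-cell convexity observation, both routine.
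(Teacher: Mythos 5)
Your proof is correct and follows essentially the same route as the paper's: both directions rest on the containment $\FSD(W,T)\subseteq\mathcal R$ together with the agreement of $\mathcal R$ and $\FSD(W,T)$ on grid lines, and the reverse direction replaces each maximal in-cell subpath by the chord between its endpoints, invoking convexity of the per-cell free space and the fact that this preserves bimonotonicity. You are somewhat more explicit than the paper in justifying that bimonotonicity yields finitely many maximal in-cell pieces, but the underlying argument is the same.
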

    \begin{proof}
        Since $\FSD(W, T)$ is completely contained in $\mathcal{R}$, any path in $\FSD(W, T)$ is also a path in $\mathcal{R}$.
        To transform a path from $p$ to $q$ in $\mathcal{R}$ to a path in $\FSD(W, T)$, replace each maximal subpath $\pi$ that lies inside a cell of $\FSD(W, T)$, with the segment connecting its endpoints.
        The obstacles of $\mathcal{R}$ agree with the obstacles of $\FSD(W, T)$ on the boundary of cells, and thus if $\pi$ starts or ends on the boundary of a cell, the respective endpoint lies in $\FSD(W, T)$.
        Additionally, because $p$ and $q$ lie in $\FSD(W, T)$, we have that $\pi$ must always start and end at points in $\FSD(W, T)$.
        By convexity of the free space inside a cell, the line segment connecting the endpoints of $\pi$ lies in $\FSD(W, T)$, and so does the resulting path.
        This replacement preserves bimonotonicity, completing the proof.
    \end{proof}

    \noindent
    To obtain $G(W, T, Z)$ we first construct an undirected graph $G(Z)$.
    This graph is the \emph{shortest paths preserving graph} by Widmayer~\cite{widmayer91rectilinear_graphs}.
    The vertices of $G(Z)$ are the points in $Z$, together with the vertices of $\mathcal{R}$ and some Steiner points.
    By weighting each edge by its length, the graph perfectly encodes rectilinear distances between points in $Z$.
    That is, the rectilinear distance in $\mathcal{R}$ between two points in $Z$ is equal to their distance in~$G(Z)$.
    
    The number of vertices of $\mathcal{R}$ is $\bigO(n |W|)$, giving the graph a complexity of $\bigO((n |W| + |Z|) \log (n|Z|))$.
    The graph can be constructed in $\bigO((n |W| + |Z|) \log (n|Z|))$ time~\cite{widmayer91rectilinear_graphs}.
    
    The edges of $G(Z)$ are all horizontal or vertical line segments.
    We set $G(W, T, Z)$ to be the graph $G(Z)$, but with each edge directed towards the right (if horizontal) or top (if vertical).
    Observe that $G(W, T, Z)$ perfectly encodes reachability: for two points $p = (x, y)$ and $q = (x', y')$ in $Z$, if there is a bimonotone rectilinear path from $p$ to $q$ in $\mathcal{R}$, then any rectilinear shortest path from $p$ to $q$ must be bimonotone, and hence there must be a (bimonotone) path between them in $G(W, T, Z)$.
    Conversely, any path in $G(W, T, Z)$ is also a path in $\mathcal{R}$.
    Thus $\dF(P[x, x'], T[y, y']) \leq \Delta'$ if and only if there is a (bimonotone) path from $(x, y)$ to $(x', y')$ in~$G(W, T, Z)$.

    \thmReachabilityGraph*

\section{Vertex-to-vertex pathlets}
\label{sec:vertex-to-vertex_pathlets}

    Let $\Delta' = 4\Delta$, and let $C$ be a set of pathlets.
    Recall that we can compute $|\Cov(P, \I) \setminus \Cov(C)|$, for a given pathlet $(P, \I)$, in $\bigO(|\I| \log n)$ time (Lemma~\ref{lem:coverage_ds}).
    We fix some integer $i$. We then give an algorithm for constructing a vertex-to-vertex $(\ell, \Delta')$-pathlet $(P, \I)$ where $P$ starts at the $i$'th vertex of $S$, and its coverage over $\U \setminus \Cov(C)$ is maximum.
    
    We find for each subcurve $S'$ of $S$ of length at most $\ell$ a reference-optimal $(\ell, \Delta')$-pathlet.
    To this end, we consider each vertex $S(i)$ of $S$ separately. 
    We construct a set of reference-optimal pathlets $(S[i, i+1], \I_1), \dots, (S[i, i+j], \I_j), \dots, (S[i, i+\ell], \I_\ell)$.
    We let each interval $\I_j$ contain all maximal intervals $[y, y']$ for which $\dF(S[i, i+j], T[y, y']) \leq \Delta'$, and thus all maximal intervals for which $(i, y)$ can reach $(i+j, y')$ by a bimonotone path in $\FSD(S, T)$.

    Recall that in Definition~\ref{def:critical_points} we defined a set of  \emph{critical points}. Let $Z$ denote all critical points in $\FSD(S[i, i+\ell], T)$ of the form $(i+j, y)$, for integers $j \in [\ell]$.
    That is, $Z$ contains for all  $j \in [\ell]$ the endpoints of all connected components (vertical line segments) of $\FSD(S, T) \cap (\{i+j\} \times [1, n])$.    
    Since each cell has at most $\bigO(1)$ such critical points, it follows that $|Z| \in \bigO(n \ell)$.
    Observe that for any $\Delta'$-matching between $T$ and a vertex-to-vertex subcurve, we can always extend each curve in the matching to start and end at a point in $Z$: 

    \begin{observation}
   Let $(P, \I)$ be a vertex-to-vertex  $(\ell, \Delta')$-pathlet where $P$ starts at the $i$'th vertex. 
        Then there exists an $(\ell, \Delta')$-pathlet $(P, \I')$ with $\Cov(P, \I) \subseteq \Cov(P, \I')$ such that for each interval in $\I'$, the corresponding bimonotone path in $\FSD(S, T)$ starts and ends at a point in $Z$. 
    \end{observation}

    We create a sweepline algorithm that, for each $j \in [\ell]$, constructs a reference-optimal $(\ell, \Delta')$-pathlet $(S[i, j], \I_j)$. 
    We let each interval $\I_j$ contain all maximal intervals $[y, y']$ for which $\dF(S[i, i+j], T[y, y']) \leq \Delta'$, and thus all maximal intervals for which $(i, y)$ can reach $(i+j, y')$ by a bimonotone path in $\FSD(S, T)$.
    Note that both $(i, y)$ and $(i+j, y')$ are critical points.
    Thus we aim to find all maximal intervals $[y, y']$ for which $\FSD(S, T)$ contains a bimonotone path between critical points $(i, y)$ and $(i+j, y')$.
    
    To this end, we construct, for each $i \in [n]$, the reachability graph $G(S[i, i+\ell], T, Z)$ from~\cref{sec:reachability_graph}, which encodes reachability between all critical points.
    This graph takes $\bigO((n \ell + |Z|) \log (n|Z|)) = \bigO(n \ell \log n)$ time to construct and has complexity $\bigO(n \ell \log n)$ (see~\cref{thm:reachability_graph}).
    We aim to annotate each vertex $\mu$ (which does not necessarily have to be a critical point) in $G(S[i, i+\ell], T, Z)$ with the minimum $y$, such that there exists a critical point $(i, y)$ that can reach $\mu$.
    We annotate $\mu$ with $\infty$ if no such value $y$ exists.
    
\subparagraph*{Annotating vertices.}
    We begin by annotating the vertices $(i, y)$ in $\bigO(n)$ time, by scanning over them in order of increasing $y$-coordinate.
    We go over the remaining vertices in $yx$-lexicographical order, where we go over the vertices based on increasing $y$-coordinate, and increasing $x$-coordinate when ties arise.
    Each vertex $\mu$ that we examine has only incoming arcs originating from vertices below and left of $\mu$.
    By our lexicographical ordering, each of these vertices are already annotated. 
    The minimal $y$ for which there exists a path from $(i, y)$ to $\mu$, must be the minimum over all its incoming arcs which we compute in time proportional to the in-degree of $\mu$. 
    If $\mu$ has no incoming arcs, we annotate it with $\infty$. 
    
    Let $V$ and $A$ be the sets of $\bigO(n \ell \log n)$ vertices and arcs of $G(S[i, i+\ell], T, Z)$.
    For the above annotation procedure, we first compute the $yx$-lexicographical ordering of the vertices, based on their corresponding points in the parameter space.
    This takes $\bigO(|V| \log |V|)$ time.
    Afterwards, we go over each vertex and each incoming arc exactly once, which take an additional $\bigO(|V| + |A|)$ time.
    In total, we annotate all vertices in $\bigO(n \ell \log^2 n)$ time.

\subparagraph*{Constructing the pathlets.}
    With the annotations, constructing the pathlets becomes straightforward.
    For each $j \in [\ell]$, we construct $\I_j$ as follows.
    We iterate over all critical point $(i+j, y')$ in the graph $G(S[i, i+\ell], T, Z)$. 
    For each critical point $(i+j, y')$ with a finite annotation $y$, we add the interval $[y, y']$ to $\I_j$.
    This procedure ensures that $\I_j$ contains all maximal intervals $[y, y']$ for which $\dF(S[i, i+j], T[y, y']) \leq \Delta'$, creating an optimal pathlet $(S[i, i+j], \I_j)$ with respect to its reference curve.
    Since there are $\bigO(n)$ critical points per $j$, this algorithm uses $\bigO(n \ell)$ time.
    Storing the pathlets takes $\bigO(n \ell)$ space.
    We conclude:

    \begin{theorem}
    \label{thm:constructing_vertex-to-vertex}
        Suppose $\Cov(C)$ is preprocessed by ~\cref{lem:coverage_ds}.
        In $\bigO(n^2 \ell \log^2 n)$ time and using $\bigO(n \ell \log n)$ space, we can construct an optimal vertex-to-vertex $(\ell, \Delta')$-pathlet $(P, \I)$.
    \end{theorem}
    \begin{proof}
        For a given vertex $S(i)$, we compute optimal pathlets $(S[i, i+j], \I_j)$ with respect to their reference curves for $j \in [\ell]$ in $\bigO(n \ell \log^2 n)$ time, using $\bigO(n \ell \log n)$ space.
        Using the data structure of~\cref{lem:coverage_ds}, we subsequently compute the coverage of one of these pathlets $\bigO(n \log n)$ time, so $\bigO(n \ell \log n)$ time for all.
        We pick the best pathlet and remember its coverage.
        Doing so for all vertices $S(i)$ of $S$, we obtain $|S|$ pathlets, of which we report the best.
        By only keeping the best pathlet in memory, rather than all $|S|$, the space used by these pathlets is lowered from $\bigO(n^2)$ to $\bigO(n)$.
    \end{proof}

\section{Subedge pathlets}
\label{sec:subedge_pathlets}

    Let $\Delta' = 4\Delta$, and let $C$ be a set of pathlets. We assume that $\Cov(C)$ has at most $\bigO(n^2 \log n)$ connected components. 
    We provide an algorithm for constructing a subedge $(2, \Delta')$-pathlet $(P, \I)$ whose coverage -- (the sum of lengths in $\Cov(P, \I) \backslash \Cov(C)$)  -- is at least one-eighth the optimum.

    Recall that a subedge pathlet $(P, \I)$ is a pathlet where $P = e[x, x']$ is a subsegment of some edge $e$ of $S$.
    We construct a subedge pathlet given the edge $e$.
    We first discretize the problem, identifying a set of $\bigO(n^2)$ \emph{critical points} in $\FSD(e, T)$.
    This set ensures that there exists a subedge pathlet $(e[x, x'], \I)$ with at least one-fourth the coverage of any pathlet using a subedge of $e$ as a reference curve, where for all $[y, y'] \in \I$, the points $(x, y)$ and $(x', y')$ are both critical points.

    For $j \in [n-1]$, consider the following six extreme points of $\FSD(e, T) \cap ([1, 2] \times [j, j+1])$ (where some points may not exist):
    \begin{itemize}
        \item A leftmost point of $\FSD(e, T) \cap ([1, 2] \times [j, j+1])$,
        \item A rightmost point of $\FSD(e, T) \cap ([1, 2] \times [j, j+1])$,
        \item The leftmost and rightmost points of $\FSD(e, T) \cap ([1, 2] \times \{j\})$, and
        \item The leftmost and rightmost points of $\FSD(e, T) \cap ([1, 2] \times \{j+1\})$.
    \end{itemize}
    Let $X_j$ be the set of $x$-coordinates of these points, and let $X = \bigcup X_j$ be the set of all these coordinates.
    Let $x_1, \dots, x_m$ be the set of values in $X$, sorted in increasing order.
    We call every point $(x_i, y)$ that is an endpoint of a connected component (vertical segment) of $\FSD(e, T) \cap (\{x_i\} \times [1, n])$ a critical point.
    Let $Z$ be the set of at most $4n^2 = \bigO(n^2)$ critical points.

    Before we restrict pathlets to be defined by these critical points, we first allow a broader range of pathlets.
    We consider the edge $\rev{e}$, obtained by reversing the direction of $e$, and look at constructing a pathlet that is a subedge of either $e$ or $\rev{e}$.
    We show that by restricting pathlets to be defined by $Z$, while allowing for reference curves that are subcurves of $\rev{e}$, results in losing only a factor four in the maximum coverage.
    
    \begin{figure}
        \centering
        \includegraphics{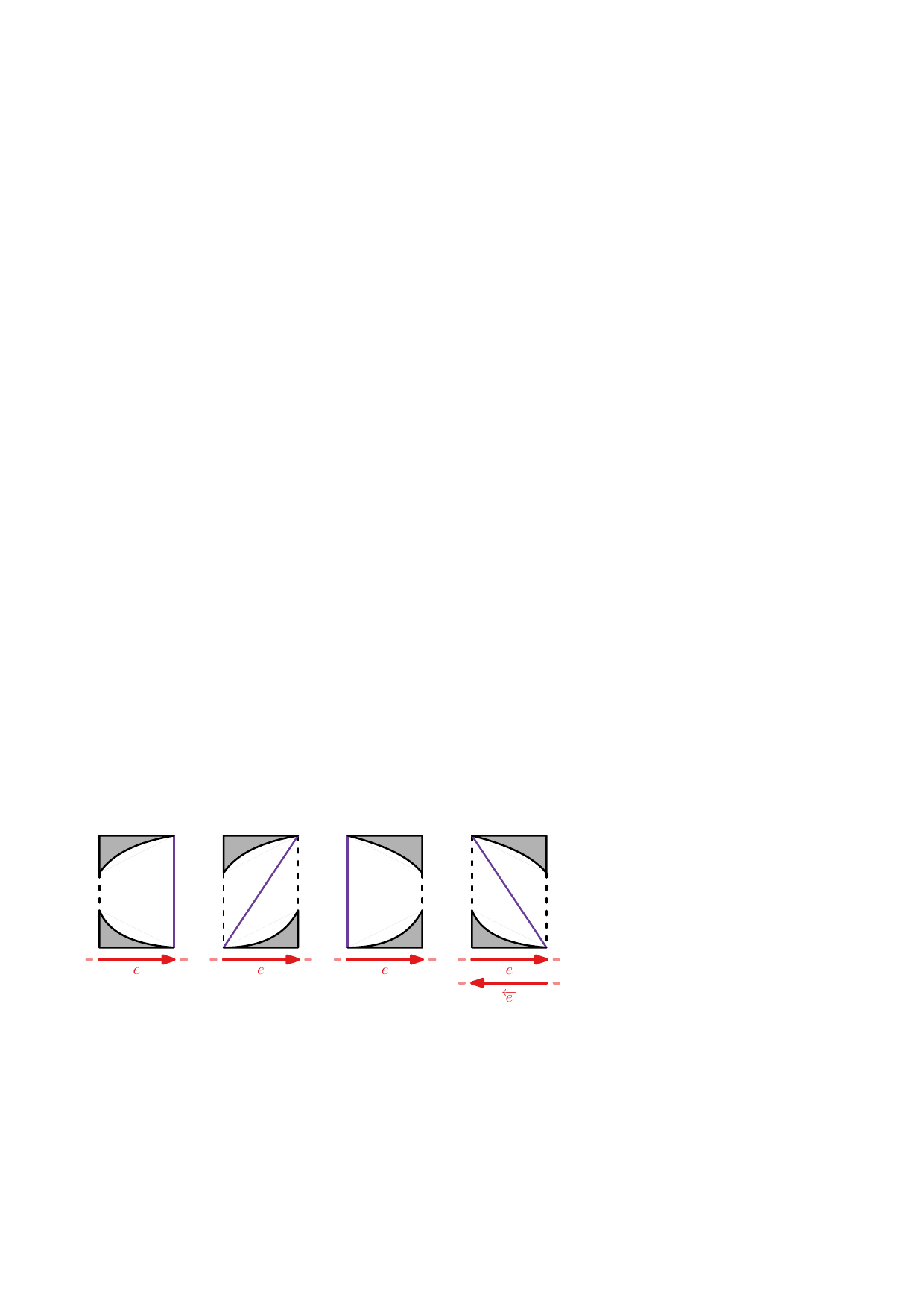}
        \caption{The connected components of $\FSD(e', T)$ fall into these four cases, based on where the minima and maxima of the bottom and top parabolic arcs lie.
        In the first three cases, there is a clear matching with optimal coverage (purple).
        In the fourth case, the matching is only valid when mirroring the free space, achieved by using $\rev{e}$ instead of $e$.}
        \label{fig:subedge_pathlets}
    \end{figure}

    \begin{restatable}{lemma}{subedgecriticalpoints}
     \label{lem:subedge_critical_points}
         Let $C$ be a set of pathlets.
        For any subedge $(2, \Delta')$-pathlet $(e[x, x'], \I)$, there exists a subedge $(2, \Delta')$-pathlet $(P, \I')$ with
        \[
            | \Cov(P, \I') \setminus \Cov(C) | \geq \frac{1}{4} | \Cov(e[x, x'], \I) \setminus \Cov(C) |,
        \]
        where $P$ is equal to $e[x_i, x_j]$ or $\rev{e}[x_i, x_j]$ for some $i$ and $j$, and for every interval $[y, y'] \in \I'$, the points $(x_i, y)$ and $(x_j, y')$ are contained in $Z$.        
    \end{restatable}
    \begin{proof}
        Consider a subedge $(2, \Delta')$-pathlet $(e[x, x'], \I)$.
        Any interval $[a, b] \in \I$ corresponds to a bimonotone path from $(x, a)$ to $(x', b)$ in $\FSD(e, T)$.
        Consider such an interval $[a, b]$ and a corresponding path $\pi$.
        
        Suppose first that $x_i \leq x \leq x' \leq x_{i+1}$ for some $i$.
        Observe that every connected component of $\FSD(e, T) \cap ([x_i, x_{i+1}] \times [1, n])$ is bounded on the left and right by (possibly empty) vertical line segments, and that the bottom and top chains are parabolic curves whose extrema are the endpoints of these segments.
        In particular, these connected components are convex.
        Thus there is a straight line segment $e'$ from $(x, a)$ to $(x', b)$ in the free space.
        The line segment $e^*$ connecting the extrema of the parabolic curves bounding the connected component containing $(x, a)$ and $(x', b)$ is longer than $e'$.
        The endpoints of $e^*$ are both critical points, and $e^*$ describes a valid matching between a subcurve of $T$ and either a subcurve of $e$, or a subcurve of $e'$.
        See~\cref{fig:subedge_pathlets}.
        As there are four different reference curves we choose from, the resulting intervals are spread over four different pathlets.
        Therefore, one of the pathlets must have at least one-fourth the coverage of any subedge pathlet.
        
        Next suppose that $x_i \leq x \leq x_{i+1} < x'$ for some $i$.
        At some point, $\pi$ reaches a point $(x_{i+1}, a')$.
        Let $(x^*, y^*)$ be the lowest point in the connected component containing $(x, a)$.
        This point is a critical point.
        By convexity, the segment $e^*$ from $(x^*, y^*)$ to $(x_{i+1}, a')$ lies in the free space.
        Because $y^* \leq a$ by our choice of $(x^*, y^*)$, we may connect $e^*$ to the suffix of $\pi$ that starts at $(x_{i+1}, a')$ to obtain a matching that starts at a critical point and that covers at least as much of $T$ as the original matching.
        Applying a symmetric procedure to the end $(x', b)$ of $\pi$ yields a matching that starts and ends at critical points without losing coverage.
        Again, since there are four different reference curves to choose from for the intervals in $\I$, the resulting intervals are spread over four different pathlets.
        One of the pathlets must therefore have at least one-fourth the coverage of any subedge pathlet.
    \end{proof}

    We find for each point $e(x_i)$ of $e$ corresponding to a critical point a subedge pathlet whose reference curve starts at $e(x_i)$ and ends at some point $e(x_j)$ that also corresponds to a critical point.
    To this end, we consider each point $e(x_i)$ separately.
    We proceed akin to the construction for vertex-to-vertex pathlets~\cref{sec:vertex-to-vertex_pathlets}, with some optimization steps.
    
    It proves too costly to consider each reference curve $e[x_i, x_{i'}]$ for every $x_i$ we consider.
    By sacrificing the quality of the pathlet slightly, settling for a pathlet with at least one-eighth the coverage of any subedge pathlet rather than one-fourth, we can reduce the number of reference curves we have to consider from $\Theta(m^2) = \bigO(n^2)$ to $\bigO(m \log m)$.
    Let $(e[x_i, x_{i'}], \I)$ be a subedge pathlet.
    We can split $e[x_i, x_{i'}]$ into two subedges $e[x_i, x_{i+2^j}]$ and $e[x_{i'-2^j}, x_{i'}]$.
    The matchings corresponding to $\I$ naturally decompose into two sets of matchings (whose matched subcurves may overlap), giving rise to two pathlets $(e[x_i, x_{i+2^j}], \I_1)$ and $(e[x_{i'-2^k}, x_{i'}], \I_2)$ with $\I_1 \cup \I_2 = \I$.
    Thus at least one of these pathlets has at least half the coverage that $(e[x_i, x_{i'}], \I)$~has.
    By~\cref{lem:subedge_critical_points}, a pathlet $(e[x_i, x_{i+2^j}], \I)$ that has maximum coverage out of all such pathlets then covers at least one-eighth of what any other subedge pathlet $(e[x, x'], \I')$ covers.

    We create a sweepline algorithm that, for each $e[x_i, x_{i+2^j}]$ (with $j \leq \log (m-i)$), constructs a reference-optimal $(\ell, \Delta')$-pathlet $(e[x_i, x_{i+2^j}], \I_j)$.
    We let each interval $\I_j$ contain all maximal intervals $[y, y']$ for which $\dF(e[x_i, x_{i+2^j}], T[y, y']) \leq \Delta'$, and thus all maximal intervals for which $(x_i, y)$ can reach $(x_{i+2^j}, y')$ by a bimonotone path in $\FSD(S, T)$.
    Note that both $(x_i, y)$ and $(x_{i+2^j}, y')$ are critical points.
    Thus we aim to find all maximal intervals $[y, y']$ for which the critical point $(x_i, y)$ can reach the critical point $(x_{i+2^j}, y')$ by a bimonotone path in $\FSD(S, T)$.

    Let $Z_i$ be the subset of $\bigO(n \log n)$ critical points with $x$-coordinate equal to $x_i$ or $x_{i+2^j}$ for some $j \leq \log (m-i)$.
    We construct, for each $i \in [n]$, the reachability graph $G(e, T, Z_i)$ from~\cref{sec:reachability_graph}, which encodes reachability between the critical points in $Z_i$.
    This graph takes $\bigO((n + |Z_i|) \log (n|Z_i|)) = \bigO(n \log^2 n)$ time to construct and has complexity $\bigO(n \log^2 n)$ (see~\cref{thm:reachability_graph}).
    We aim to annotate each vertex $\mu$ (note that $\mu$ does not have to be a critical point) in $G(e, T, Z_i)$ with the minimum $y$, such that there exists a critical point $(x_i, y)$ that can reach $\mu$.
    We annotate $\mu$ with $\infty$ if no such $y$ exists.
    
\subparagraph*{Annotating vertices and asymptotic analysis.}
    We first annotate the vertices $(x_i, y)$ in $\bigO(n)$ time by scanning over them in order of increasing $y$-coordinate.
    We process the remaining vertices in $yx$-lexicographical order, first by increasing $y$-coordinate, and  by increasing $x$-coordinate when ties arise.
    Each vertex $\mu$ that we consider has only incoming arcs that originate from vertices below and left of $\mu$.
    By our lexicographical ordering, each of these vertices are already annotated. 
    The minimal $y$ for which there exists a path from $(x_i, y)$ to $\mu$, must be the minimum over all its incoming arcs which we compute in time proportional to the in-degree of $\mu$. 
    If $\mu$ has no incoming arcs, we annotate it with $\infty$. 
    
    Let $V$ and $A$ be the sets of $\bigO(n \log^2 n)$ vertices and arcs of $G(e, T, Z_i)$.
    For the above annotation procedure, we first compute the $yx$-lexicographical ordering of the vertices, based on their corresponding points in the parameter space.
    This takes $\bigO(|V| \log |V|)$ time.
    Afterwards, we go over each vertex and each incoming arc exactly once, which take an additional $\bigO(|V| + |A|)$ time.
    In total, we annotate all vertices in $\bigO(n \log^3 n)$ time.

\subparagraph*{Constructing the pathlets.}
    Using the annotations, constructing the pathlets becomes straightforward.
    For each $j \in [\log (m-i)]$, we construct $\I_j$ as follows.
    We iterate over all critical point $(x_{i+2^j}, y')$ in the graph $G(e, T, Z_i)$. 
    For each critical point $(x_{i+2^j}, y')$ with a finite annotation $y$, we add the interval $[y, y']$ to $\I_j$.
    This procedure ensures that $\I_j$ contains all maximal intervals $[y, y']$ for which $\dF(e[x_i, x_{i+2^j}], T[y, y']) \leq \Delta'$, making an optimal pathlet $(e[x_i, x_{i+2^j}], \I_j)$ with respect to its reference curve.
    As there are $\bigO(n)$ critical points per $j$, this algorithm uses $\bigO(n \log n)$ time.
    Storing the pathlets takes $\bigO(n \log n)$ space.
    Thus, we conclude the following:

    \begin{lemma}
        Let $C$ be a set of pathlets where $\Cov(C)$ has $\bigO(n^2 \log n)$ connected components.
        Suppose $\Cov(C)$ is preprocessed into the data structure of~\cref{lem:coverage_ds}.
        In $\bigO(n^2 \log^3 n)$ time and using $\bigO(n \log^2 n)$ space, we can construct a $(2, \Delta')$-pathlet $(P, \I)$ with
        \[
            \lVert \Cov(P, \I) \setminus \Cov(C) \rVert \geq \frac{1}{8} \lVert \Cov(P', \I') \setminus \Cov(C) \rVert
        \]
        for any $(2, \Delta')$-pathlet $(P', \I')$ where $P'$ is a subsegment of a given directed line segment $e$.
        The intervals in $\I$ all have endpoints that come from a set of at most $4n^2$ values.
    \end{lemma}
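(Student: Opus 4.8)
The plan is to assemble the components developed in this section into one procedure and then verify the time, space, and approximation bounds. Fix the directed segment $e$. First I would compute, in $\bigO(n)$ time, the sorted set $X = \{x_1 < \dots < x_m\}$ of the $m = \bigO(n)$ critical $x$-coordinates by solving a constant number of quadratics per cell of $\FSD(e, T)$. Then, for each $i \in [m]$, instead of materializing all $\bigO(n^2)$ critical points $Z$, I would compute only the slice $Z_i \subseteq Z$ of $\bigO(n \log n)$ critical points whose $x$-coordinate is $x_i$ or $x_{i+2^j}$ for some $j \le \log(m-i)$, by intersecting the $\bigO(\log n)$ relevant vertical lines with the free space. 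I would build the reachability graph $G(e, T, Z_i)$ of~\cref{sec:reachability_graph} (complexity $\bigO(n \log^2 n)$), annotate every vertex in $yx$-lexicographic order with the least $y$ such that a critical point $(x_i, y)$ reaches it (propagating minima along in-arcs; the sort dominates, at $\bigO(n \log^3 n)$), and then, for each $j \le \log(m-i)$, form $\I_j$ by adding $[y, y']$ for every critical point $(x_{i+2^j}, y')$ with finite annotation $y$. As already argued, $\bigcup \I_j$ is exactly the union of all intervals $[y, y']$ with $\dF(e[x_i, x_{i+2^j}], T[y, y']) \le \Delta'$, so $(e[x_i, x_{i+2^j}], \I_j)$ is reference-optimal. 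Doing the same for $\rev e$ — which~\cref{lem:subedge_critical_points} forces us to consider — yields $\bigO(n \log n)$ candidate pathlets; using the data structure of~\cref{lem:coverage_ds} I evaluate $\lVert \Cov(\cdot) \setminus \Cov(C) \rVert$ for each in $\bigO(n \log n)$ time and output the best.

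For the approximation factor, let $(P', \I')$ be an arbitrary subedge $(2, \Delta')$-pathlet of $e$. By~\cref{lem:subedge_critical_points} there is a pathlet $(Q, \J)$ with critical-point endpoints, $Q \in \{e[x_i, x_j], \rev e[x_i, x_j]\}$, and $\lVert \Cov(Q, \J) \setminus \Cov(C) \rVert \ge \tfrac14 \lVert \Cov(P', \I') \setminus \Cov(C) \rVert$; take $Q = e[x_i, x_j]$ with $i < j$, the reversed and degenerate cases being symmetric or trivial. Pick $k$ with $2^k \le j - i < 2^{k+1}$, so $x_{j - 2^k} \le x_{i + 2^k}$, and split $Q$ into the reference curves $R_1 = e[x_i, x_{i + 2^k}]$ and $R_2 = e[x_{j - 2^k}, x_j]$; a short index check shows both lie among the $\bigO(n\log n)$ reference curves the procedure enumerates. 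A bimonotone path realizing an interval $[a, b] \in \J$ crosses $x = x_{i+2^k}$ and $x = x_{j-2^k}$ at heights $c_1 \ge c_2$; its prefix and suffix witness $\dF(R_1, T[a, c_1]) \le \Delta'$ and $\dF(R_2, T[c_2, b]) \le \Delta'$, and $[a, c_1] \cup [c_2, b] \supseteq [a, b]$. Hence there are interval sets $\J_1, \J_2$ with $\Cov(R_1, \J_1) \cup \Cov(R_2, \J_2) \supseteq \Cov(Q, \J)$, so one of them, say for $R_t$, covers at least half of $(Q, \J)$ over the uncovered points. Since the procedure returns a \emph{reference-optimal} pathlet for $R_t$, the overall output covers at least $\tfrac12 \cdot \tfrac14 = \tfrac18$ of $(P', \I')$ over the uncovered points. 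All interval endpoints are $y$-coordinates of critical points, of which there are at most $4n^2$.

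The complexity bounds follow by summation: over all $i \in [m]$ the $\bigO(n \log^3 n)$ annotation dominates, giving $\bigO(n^2 \log^3 n)$ time, and running this for both $e$ and $\rev e$ leaves the bound unchanged (the coverage evaluations add only $\bigO(n^2\log^2 n)$). At any moment we store $X$ ($\bigO(n)$), a single reachability graph and its annotations ($\bigO(n \log^2 n)$), the $\bigO(\log n)$ pathlets of the current iteration ($\bigO(n \log n)$), and the best pathlet so far ($\bigO(n)$), for $\bigO(n \log^2 n)$ space in total, on top of the data structure of~\cref{lem:coverage_ds}.

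I expect the obstacle to be organizational rather than conceptual: keeping the working space at $\bigO(n \log^2 n)$ hinges on computing each $Z_i$ on the fly rather than storing $Z$ wholesale (which alone would cost $\bigO(n^2)$), and the approximation argument must chain the factor $\tfrac14$ of~\cref{lem:subedge_critical_points} with the factor $\tfrac12$ of the dyadic split while confirming that both halves $R_1$ and $R_2$ genuinely occur in the $\bigO(n \log n)$-size family of reference curves the procedure inspects.
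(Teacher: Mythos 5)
Your proposal is correct and follows the paper's approach: same dyadic family of reference curves $e[x_i, x_{i+2^j}]$, same per-slice reachability graph $G(e,T,Z_i)$ with lexicographic annotation, same chaining of the factor $\tfrac14$ from the critical-point lemma with the factor $\tfrac12$ from the dyadic split, and the same on-the-fly construction of $Z_i$ to keep the space at $\bigO(n\log^2 n)$. You are in fact slightly more explicit than the paper's own (terse) proof in two useful places — you verify by an index check that both halves $R_1, R_2$ of the split genuinely lie in the enumerated family, and you state explicitly that the procedure must also be run on $\rev{e}$ to invoke the $\tfrac14$ bound from Lemma~\lipicsref{lem:subedge_critical_points}, which the paper leaves implicit.
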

    \begin{proof}
        For a given point $e(x_i)$, we compute optimal pathlets $(e[x_i, x_{i+2^j}], \I_j)$ with respect to their reference curves for $j \in [\log (m-i)]$ in $\bigO(n \log^3 n)$ time, using $\bigO(n \log n)$ space.
        Using the data structure of~\cref{lem:coverage_ds}, we subsequently compute the coverage of one of these pathlets $\bigO(n \log n)$ time, so $\bigO(n \log^2 n)$ time for all.
        We pick the best pathlet and remember its coverage.
        Doing so for all points $e(x_i)$, we obtain $m$ pathlets, of which we report the best.
        This pathlet has at least one-eighth the coverage of any other subedge $(2, \Delta')$-pathlet $(e[x, x'], \I)$.
        By only keeping the best pathlet in memory, rather than all $m$, the space used by these pathlets is lowered from $\bigO(mn)$ to $\bigO(n)$.
    \end{proof}
    
    \begin{theorem}
    \label{thm:constructing_subedge}
        Suppose that the universe $\U$ and the coverage $\Cov(C)$ is preprocessed into the data structure of~\cref{lem:coverage_ds}.
        In $\bigO(n^3 \log^3 n)$ time and using $\bigO(n \log^2 n)$ space, we can construct a $(2, \Delta')$-pathlet $(P, \I)$ with
        \[
            | \Cov(P, \I) \setminus \Cov(C) | \geq \frac{1}{8} | \Cov(P', \I') \setminus \Cov(C) |
        \]
        for any subedge $(2, \Delta')$-pathlet $(P', \I')$.
    \end{theorem}

\section{Conclusion}
    In this work, we presented an improved approximation algorithm for subtrajectory clustering.
    We discuss our technical contribution, and how it differs from previous works, our asymptotic improvements and finally interesting directions for future work. 

    \subparagraph{Technical contribution.}   
    Our technical contributions are threefold: 

    First, we introduced a new type of curve simplification in Section~\ref{sec:simplification}.
    This simplification allows us to construct a curve $S$, such that our clustering needs to consider only pathlets whose reference curve is a subcurve of $S$.
    Although numerous similar curve-simplification algorithms exist, our method distinguishes itself by lying significantly closer to the input curve $T$. Consequently, our approximation algorithm is a $4$-approximation in $\Delta$, compared to the $11$-approximations of prior works. We consider this simplification to be of independent interest, as future works may immediately use our simplification method to obtain $4$-approximations in $\Delta$ also. 

    Secondly, we considered in Section~\ref{sec:the_algorithm} an extension to the greedy set cover algorithm wherein each iteration adds an approximately-maximum covering element, rather than a maximum one.
    Observe that $P$ can always be divided into at most three subcurves, where at most one of them starts and ends at a vertex of $S$ (a vertex-subcurve) and at most two of them are subcurves of an edge of $S$ (a subedge of $S$).
    We design a greedy meta-algorithm, that in each iteration computes an $(\ell, \Delta)$-pathlet $(P, \mathcal{I})$ with approximately-maximum coverage, whose reference curve is a vertex-subcurve or subedge of $S$.   
    Our approximately greedy set cover analysis shows that our meta-algorithm computes a clustering of size $\bigO(k \log n)$. A key takeaway from our construction is that by restricting our attention to vertex-subcurves and subedges of $S$, we significantly reduce the set of candidate pathlets from $\Ot(n^3 \ell)$ to $\Ot(n^2)$.
    We consider this fact to also be of independent interest. Indeed, our subsequent algorithm spends near-linear time per candidate pathlet but future works may discover more efficient algorithms over the same smaller candidate set. 

    Finally, we presented algorithms in Sections~\ref{sec:vertex-to-vertex_pathlets} and \ref{sec:subedge_pathlets} that compute the corresponding candidate pathlet for a candidate reference curve in near-linear time and near-linear space. 
    The key observation to this contribution is that we show that it suffices to compute all candidate pathlets on the fly, significantly reducing the space.

    \subparagraph{Asymptotic improvements.}
    Compared to the best prior deterministic work~\cite{conradi2023finding}, our algorithm improves the running time by a factor near-linear in $n \ell$, improves the space by a factor near-linear in $n^2 \ell$, and improves the approximation in $\Delta$ from a factor $11$ to $4$, all whilst asymptotically matching the size of the clustering. 
    We consider this a significant improvement over the state-of-the-art.
    
    When we compare to previous randomized work~\cite{bruning_faster_2022, bruning_subtrajectory_2023} 
    we improve the running time by a factor near-linear in $\ell$,
     improve space by a factor $n$, and improve the approximation in $\Delta$ from a factor $11$ to $4$.
    A downside of our approach is that, compared to randomised works, we only asymptotically match the clustering size whenever $\ell$ is relatively large  (i.e., $\ell \in \Omega(\log n / \log k)$).
    However, we note that on all other algorithmic quality measures, we still offer a substantial improvement whilst also being deterministic. In addition, when considering algorithmic performance in practice, we note that these previous randomized results~\cite{bruning_faster_2022, bruning_subtrajectory_2023}  use $\eps$-net sampling.
    Such a sampling procedures leads to very high hidden constants in the asymptotic clustering size which makes such an approach impractical.     

    \subparagraph{Future work.}
    We think it remains an interesting open problem whether one can obtain a clustering size of $\bigO(k \ell \log k)$ in a deterministic manner. 
    We also note that, currently, our algorithm considers a set of $\Ot(n^2)$ reference curves $P$, and computes an $(\ell, \Delta)$-pathlet $(P, \I)$ with approximately-maximum coverage for each reference curve independently, in near-linear time. 
    We think it is an interesting open problem whether one can present an algorithm that is overall more efficient whenever these maximum pathlets are considered simultaneously rather than independently. 
    
\bibliography{bibliography}

\begin{thebibliography}{10}

\bibitem{agarwal_subtrajectory_2018}
Pankaj~K. Agarwal, Kyle Fox, Kamesh Munagala, Abhinandan Nath, Jiangwei Pan,
  and Erin Taylor.
\newblock Subtrajectory clustering: Models and algorithms.
\newblock In {\em proc. 37th {ACM} {SIGMOD}-{SIGACT}-{SIGAI} {Symposium} on
  {Principles} of {Database} {Systems} ({PODS})}, pages 75--87, 2018.
\newblock \href {https://doi.org/10.1145/3196959.3196972}
  {\path{doi:10.1145/3196959.3196972}}.

\bibitem{agarwal_near-linear_2005}
Pankaj~K. Agarwal, Sariel Har-Peled, Nabil~H. Mustafa, and Yusu Wang.
\newblock Near-linear time approximation algorithms for curve simplification.
\newblock {\em Algorithmica}, 42(3):203--219, 2005.
\newblock \href {https://doi.org/10.1007/s00453-005-1165-y}
  {\path{doi:10.1007/s00453-005-1165-y}}.

\bibitem{bruning_subtrajectory_2023}
Hugo~A. Akitaya, Frederik Brüning, Erin Chambers, and Anne Driemel.
\newblock Subtrajectory clustering: Finding set covers for set systems of
  subcurves.
\newblock {\em Computing in Geometry and Topology}, 2(1):1:1–1:48, 2023.
\newblock \href {https://doi.org/10.57717/cgt.v2i1.7}
  {\path{doi:10.57717/cgt.v2i1.7}}.

\bibitem{alt95continuous_frechet}
Helmut Alt and Michael Godau.
\newblock Computing the {F}r{\'{e}}chet distance between two polygonal curves.
\newblock {\em International Journal of Computational Geometry \&
  Applications}, 5:75--91, 1995.
\newblock \href {https://doi.org/10.1142/S0218195995000064}
  {\path{doi:10.1142/S0218195995000064}}.

\bibitem{bruning_faster_2022}
Frederik Brüning, Jacobus Conradi, and Anne Driemel.
\newblock Faster approximate covering of subcurves under the {F}réchet
  distance.
\newblock In {\em proc. 30th {Annual} {European} {Symposium} on {Algorithms}
  ({ESA})}, pages 28:1--28:16, Dagstuhl, Germany, 2022.
\newblock \href {https://doi.org/10.4230/LIPIcs.ESA.2022.28}
  {\path{doi:10.4230/LIPIcs.ESA.2022.28}}.

\bibitem{buchin2017clustering}
Kevin Buchin, Maike Buchin, David Duran, Brittany~Terese Fasy, Roel Jacobs,
  Vera Sacristan, Rodrigo~I. Silveira, Frank Staals, and Carola Wenk.
\newblock Clustering trajectories for map construction.
\newblock In {\em proc. 25th {ACM SIGSPATIAL} International Conference on
  Advances in Geographic Information Systems}, pages 1--10, 2017.
\newblock \href {https://doi.org/10.1145/3139958.3139964}
  {\path{doi:10.1145/3139958.3139964}}.

\bibitem{buchin_improved_2020}
Kevin Buchin, Maike Buchin, Joachim Gudmundsson, Jorren Hendriks,
  Erfan~Hosseini Sereshgi, Vera Sacristán, Rodrigo~I. Silveira, Jorrick
  Sleijster, Frank Staals, and Carola Wenk.
\newblock Improved map construction using subtrajectory clustering.
\newblock In {\em proc. 4th {ACM} {SIGSPATIAL} {Workshop} on {Location}-{Based}
  {Recommendations}, {Geosocial} {Networks}, and {Geoadvertising}}, pages 1--4,
  2020.
\newblock \href {https://doi.org/10.1145/3423334.3431451}
  {\path{doi:10.1145/3423334.3431451}}.

\bibitem{buchin_detecting_2011}
Kevin Buchin, Maike Buchin, Joachim Gudmundsson, Maarten Löffler, and Jun Luo.
\newblock Detecting commuting patterns by clustering subtrajectories.
\newblock {\em International Journal of Computational Geometry \&
  Applications}, 21(03):253--282, 2011.
\newblock \href {https://doi.org/10.1142/S0218195911003652}
  {\path{doi:10.1142/S0218195911003652}}.

\bibitem{buchin2019approximating}
Kevin Buchin, Anne Driemel, Joachim Gudmundsson, Michael Horton, Irina
  Kostitsyna, Maarten L{\"o}ffler, and Martijn Struijs.
\newblock Approximating $(k, \ell)$-center clustering for curves.
\newblock In {\em proc. Thirtieth Annual {ACM-SIAM} Symposium on Discrete
  Algorithms ({SODA})}, pages 2922--2938, 2019.
\newblock \href {https://doi.org/10.1137/1.9781611975482.181}
  {\path{doi:10.1137/1.9781611975482.181}}.

\bibitem{buchin2022coresets}
Maike Buchin and Dennis Rohde.
\newblock Coresets for $(k, \ell)$-median clustering under the {F}r{\'{e}}chet
  distance.
\newblock In {\em proc. 8th International Conference on Algorithms and Discrete
  Applied Mathematics ({CALDAM})}, pages 167--180, 2022.
\newblock \href {https://doi.org/10.1007/978-3-030-95018-7\_14}
  {\path{doi:10.1007/978-3-030-95018-7\_14}}.

\bibitem{cheng2023curve}
Siu-Wing Cheng and Haoqiang Huang.
\newblock Curve simplification and clustering under {F}r{\'e}chet distance.
\newblock In {\em proc. 2023 Annual ACM-SIAM Symposium on Discrete Algorithms
  ({SODA})}, pages 1414--1432, 2023.
\newblock \href {https://doi.org/10.1137/1.9781611977554.ch51}
  {\path{doi:10.1137/1.9781611977554.ch51}}.

\bibitem{chvatal79greedy}
Vasek Chv{\'{a}}tal.
\newblock A greedy heuristic for the set-covering problem.
\newblock {\em Mathematics of Operations Research}, 4(3):233--235, 1979.
\newblock \href {https://doi.org/10.1287/MOOR.4.3.233}
  {\path{doi:10.1287/MOOR.4.3.233}}.

\bibitem{conradi2023finding}
Jacobus Conradi and Anne Driemel.
\newblock Finding complex patterns in trajectory data via geometric set cover.
\newblock {\em arXiv preprint arXiv:2308.14865}, 2023.

\bibitem{de_berg_fast_2013}
Mark de~Berg, Atlas~F. Cook, and Joachim Gudmundsson.
\newblock Fast {Fréchet} queries.
\newblock {\em Computational Geometry}, 46(6):747--755, 2013.
\newblock \href {https://doi.org/10.1016/j.comgeo.2012.11.006}
  {\path{doi:10.1016/j.comgeo.2012.11.006}}.

\bibitem{driemel2016clustering}
Anne Driemel, Amer Krivo{\v{s}}ija, and Christian Sohler.
\newblock Clustering time series under the {F}r{\'e}chet distance.
\newblock In {\em proc. twenty-seventh annual ACM-SIAM symposium on Discrete
  algorithms ({SODA})}, pages 766--785, 2016.
\newblock \href {https://doi.org/10.1137/1.9781611974331.ch5}
  {\path{doi:10.1137/1.9781611974331.ch5}}.

\bibitem{gudmundsson2022cubic}
Joachim Gudmundsson and Sampson Wong.
\newblock Cubic upper and lower bounds for subtrajectory clustering under the
  continuous {F}r{\'e}chet distance.
\newblock In {\em proc. 2022 Annual {ACM-SIAM} Symposium on Discrete Algorithms
  ({SODA})}, pages 173--189, 2022.
\newblock \href {https://doi.org/10.1137/1.9781611977073.9}
  {\path{doi:10.1137/1.9781611977073.9}}.

\bibitem{guibas93minimum_link}
Leonidas~J. Guibas, John Hershberger, Joseph S.~B. Mitchell, and Jack Snoeyink.
\newblock Approximating polygons and subdivisions with minimum link paths.
\newblock {\em International Journal of Computational Geometry \&
  Applications}, 3(4):383--415, 1993.
\newblock \href {https://doi.org/10.1142/S0218195993000257}
  {\path{doi:10.1142/S0218195993000257}}.

\bibitem{karp72reducibility}
Richard~M. Karp.
\newblock Reducibility among combinatorial problems.
\newblock In {\em proc. Symposium on the Complexity of Computer Computations},
  The {IBM} Research Symposia Series, pages 85--103, 1972.
\newblock \href {https://doi.org/10.1007/978-1-4684-2001-2\_9}
  {\path{doi:10.1007/978-1-4684-2001-2\_9}}.

\bibitem{van2018global}
Mees van~de Kerkhof, Irina Kostitsyna, Maarten L{\"o}ffler, Majid Mirzanezhad,
  and Carola Wenk.
\newblock Global curve simplification.
\newblock {\em European Symposium on Algorithms (ESA)}, 2019.

\bibitem{widmayer91rectilinear_graphs}
Peter Widmayer.
\newblock On graphs preserving rectilinear shortest paths in the presence of
  obstacles.
\newblock {\em Annals of Operations Research}, 33(7):557--575, 1991.
\newblock \href {https://doi.org/10.1007/BF02067242}
  {\path{doi:10.1007/BF02067242}}.

\end{thebibliography}

\appendix

\section{The interior-disjoint setting}

    Previous definitions of subtrajectory clustering have imposed various restrictions on the pathlets in the clustering. 
    For example, in~\cite{buchin_detecting_2011, buchin2017clustering,buchin_improved_2020, gudmundsson2022cubic} the pathlets must be \emph{interior-disjoint}.
    A pathlet $(P, \I)$ is interior-disjoint whenever the intervals in $\I$ are pairwise interior-disjoint.
    While we do not give dedicated algorithms for the interior-disjoint setting, we show in~\cref{lem:constructing_interior-disjoint} that we can efficiently convert any pathlet into two interior-disjoint pathlets with the same coverage.
    This gives a post-processing algorithm for converting a clustering $C$ into an interior-disjoint clustering $C'$ with at most twice the number of pathlets.
    We first show the following auxiliary lemma.

    \begin{lemma}
    \label{lem:reducing_ply}
        Given a set of intervals $\I$, we can compute a subset $\I' \subseteq \I$ with ply\footnote{
            The ply of a set of intervals is the maximum number of intervals with a common intersection.
        } at most two and with $\bigcup \I' = \bigcup \I$ in $\bigO(|\I| \log |\I|)$ time.
    \end{lemma}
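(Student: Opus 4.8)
The plan is to run the classic greedy interval-cover algorithm on each connected component of $\bigcup\I$, and then to show that the selected intervals already have ply at most two.

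First I would sort the $2|\I|$ endpoints and sweep from left to right to obtain the maximal intervals $M_1,\dots,M_p$ of $U := \bigcup\I$, with $\bigcup_i M_i = U$; this takes $\bigO(|\I|\log|\I|)$ time. Every $I\in\I$ is connected and satisfies $I\subseteq U$, hence lies entirely inside a single $M_i$. It therefore suffices to select, for each $i$ independently, a subset $\I'_i\subseteq\{I\in\I : I\subseteq M_i\}$ with $\bigcup\I'_i = M_i$ and ply at most two, and to return $\I' = \bigcup_i\I'_i$: since the $M_i$ are pairwise disjoint and each selected interval lies in exactly one of them, $\bigcup\I' = U = \bigcup\I$ and the ply of $\I'$ equals the maximum over $i$ of the ply of $\I'_i$.

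Fix a component $M = [L,R]$ and run the greedy cover: let $I_1$ be an interval containing $L$ whose right endpoint is largest; having chosen $I_1,\dots,I_j$ with right endpoints $r_1<\dots<r_j$, if $r_j<R$ let $I_{j+1}$ be an interval with left endpoint at most $r_j$ whose right endpoint is largest. Because $[L,R]\subseteq U$, such an interval always exists and satisfies $r_{j+1}>r_j$, so the process ends with a chain $I_1,\dots,I_m$ whose union contains $[L,R]$ and is contained in $M$. Writing $\ell_j,r_j$ for the endpoints of $I_j$, the heart of the argument is to prove: (i) $r_1<\dots<r_m$, which holds by construction; (ii) $\ell_1\le\dots\le\ell_m$; and (iii) $\ell_{j+2}>r_j$ for every $j$ with $j+2\le m$. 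Facts (ii) and (iii) each follow from an exchange argument against (i): if $\ell_{j+1}<\ell_j$ then $I_{j+1}$ was among the candidates considered when $I_j$ was chosen (it contains $L$ if $j=1$, and has left endpoint at most $r_{j-1}$ if $j\ge2$), which forces $r_j\ge r_{j+1}$ and contradicts (i); similarly, if $\ell_{j+2}\le r_j$ then $I_{j+2}$ was a candidate when $I_{j+1}$ was chosen, forcing $r_{j+1}\ge r_{j+2}$, again contradicting (i).

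Given (i)--(iii), fix any point $x$: the indices $j$ with $x\in I_j$, i.e.\ $\ell_j\le x\le r_j$, form a contiguous range since $(\ell_j)$ and $(r_j)$ are both nondecreasing, and this range cannot contain three consecutive indices $j,j+1,j+2$, as that would give $\ell_{j+2}\le x\le r_j$, contradicting (iii). Hence every $\I'_i$ has ply at most two, and so does $\I'$. For the running time, after the initial sort the greedy selection over all components is a single left-to-right sweep: advance a pointer through the intervals in order of left endpoint while tracking the largest right endpoint among intervals whose left endpoint has been passed, extending the current coverage to that value and starting a new component exactly when no such extension is possible. This runs in $\bigO(|\I|)$ time, for a total of $\bigO(|\I|\log|\I|)$. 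I expect the only delicate point to be the exchange arguments for (ii) and (iii): one must phrase the candidate sets and the ``largest right endpoint'' selection so that ``$I$ was available when $I'$ was selected'' genuinely implies that the right endpoint of $I'$ is at least that of $I$.
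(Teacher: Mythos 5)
Your proof is correct, and it takes a genuinely different route from the paper's. The paper's proof never decomposes $\bigcup\I$ into connected components or invokes the classical greedy set cover; instead it sorts by left endpoint, discards nested intervals, and then adds intervals one at a time while maintaining the ply-at-most-two invariant directly: whenever the newly added interval $I$ would raise the ply to three, it argues that the previously last interval $I_k$ satisfies $I_k\subseteq I_{k-1}\cup I$ and can therefore be deleted without changing the union. Your approach instead runs the textbook greedy interval cover on each component and then proves three structural monotonicity facts about the resulting chain — nondecreasing left endpoints, strictly increasing right endpoints, and the skipping property $\ell_{j+2}>r_j$ — from which ply at most two follows because any point's index set is a contiguous range of length at most two. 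Your exchange arguments for (ii) and (iii) are sound: in each case the offending interval was already a candidate one step earlier, so the max-right-endpoint selection rule forces a contradiction with (i). The two proofs buy slightly different things: the paper's is a streaming-style filter that never backtracks more than one interval and doesn't need to identify components explicitly, while yours produces the minimum-cardinality cover of each component (a stronger output, though not needed here) and cleanly separates the algorithm (standard greedy) from the structural claim (monotone chains have ply $\le 2$). Both sort once and then sweep in linear time, matching the $\bigO(|\I|\log|\I|)$ bound.
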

    \begin{proof}
        We first sort the intervals of $\I$ based on increasing lower bound.
        We then remove all intervals in $\I$ that are contained in some other interval in $\I$, which can be done in a single scan over $\I$ by keeping track of the largest endpoint of an interval encountered so far.
        We initially set $\hat{\I} = \emptyset$ and iterate over the remaining intervals in order of increasing lower bound.
        During iteration, we maintain the invariant that $\hat{\I}$ has ply at most two.
        Let $I_1, \dots, I_k$ be the intervals in $\hat{\I}$ in order of increasing lower bound.
        Suppose we consider adding an interval $I \in \I$ to $\hat{\I}$.
        If $I \subseteq \hat{\I}$, then we ignore $I$, since it does not add anything to the coverage of $(P, \hat{I})$.
        Otherwise, we set $\hat{\I} \gets \hat{\I} \cup \{I\}$.
        This may have increased the ply of $\hat{\I}$ to three, however.
        We next show that in this case, we can remove an interval from $\hat{\I}$ to decrease the ply back to two, without altering $\bigcup \hat{\I}$.
        
        Observe that if the ply of $\hat{\I}$ increases to three, then $I_{k-1}$, $I_k$ and $I$ must intersect.
        Indeed, $I$ must have a common intersection with two other intervals in $\hat{\I}$.
        Suppose for sake of contradiction that there is some $I_i \in \hat{\I}$ that intersects $I_i$ for some $i < k-1$.
        Then $I_i$ must contain the lower bounds of $I_{k-1}$ and $I_k$.
        However, $I_{k-1}$ must then also contain the lower bound of $I_k$, as otherwise $I_{k-1} \subset I_i$, which means that $I_{k-1}$ was already filtered out at the beginning of the algorithm.
        Thus, $I_i$, $I_{k-1}$ and $I_k$ have a common intersection (the lower bound of $I_k$), which contradicts our invariant that $\hat{\I}$ has ply at most two.
        Now that we know that $I_{k-1}$, $I_k$ and $I$ intersect, note that $I_k \subseteq I_{k-1} \cup I$, since the lower bound of $I_k$ lies between those of $I_{k-1}$ and $I$, and $I \nsubseteq I_k$, so the upper bound of $I_k$ lies between those of $I_{k-1}$ and $I$ as well.
        Hence we can set $\hat{\I} \gets \hat{\I} \setminus \{I_k\}$ to reduce the ply back to two, while keeping $\bigcup \hat{\I}$ the same.
        After sorting $\I$, the above algorithm constructs $\hat{\I}$ in $\bigO(|\I|)$ time.
        This gives a total running time of $\bigO(|\I| \log |\I|)$.
    \end{proof}

    \begin{lemma}
    \label{lem:constructing_interior-disjoint}
        Given an $(\ell, \Delta)$-pathlet $(P, \I)$, we can construct two interior-disjoint $(\ell, \Delta)$-pathlets $(P_1, \I_1)$ and $(P_2, \I_2)$ with $\I_1 \cup \I_2 = \I$ in $\bigO(|\I| \log |\I|)$ time.
    \end{lemma}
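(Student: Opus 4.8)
The plan is to reduce to the bounded-ply case via \cref{lem:reducing_ply} and then two-colour the resulting intervals. Since $P$ already has at most $\ell$ vertices and every $[a,b]\in\I$ satisfies $\dF(P,T[a,b])\le\Delta$, we may take $P_1=P_2=P$; all the work lies in splitting the interval set.

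First I would apply \cref{lem:reducing_ply} to $\I$, obtaining in $\bigO(|\I|\log|\I|)$ time a subset $\I'\subseteq\I$ with $\bigcup\I'=\bigcup\I$ and ply at most two. As $\I'\subseteq\I$, each interval of $\I'$ still has \f distance at most $\Delta$ to its subtrajectory of $T$, so the pathlet property is inherited for free. Consequently the two pathlets we output will have reference curve $P$ and interval sets contained in $\I$ whose coverages union to $\bigcup\I=\Cov(P,\I)$, so the coverage of the original pathlet is preserved; this is the sense in which $\I_1\cup\I_2$ ``equals'' $\I$.

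Next I would two-colour $\I'$ so that equally-coloured intervals have disjoint interiors. Consider the graph $G$ on vertex set $\I'$ with an edge between two intervals exactly when their interiors intersect; this is an interval graph. If three intervals were pairwise adjacent, their open interiors would pairwise intersect, and since the intersection of open intervals $(\ell_i,r_i)$ equals $(\max_i\ell_i,\min_i r_i)$ and is nonempty whenever all pairwise intersections are, the three would share a common point, contradicting $\mathrm{ply}(\I')\le 2$. Hence $G$ is triangle-free, so $\omega(G)\le 2$, and as interval graphs are perfect, $\chi(G)=\omega(G)\le 2$. Concretely, scanning $\I'$ in order of left endpoint (the order \cref{lem:reducing_ply} already produces, or obtainable with an extra $\bigO(|\I|\log|\I|)$ sort) and giving each interval the colour avoided by the at most one already-coloured interval whose interior it meets realises such a $2$-colouring; that interval, when it exists, is the already-coloured one of largest right endpoint, so the scan runs in $\bigO(|\I'|)$ additional time.

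Finally, let $\I_1,\I_2$ be the two colour classes and output $(P,\I_1)$ and $(P,\I_2)$. Within a colour class no two intervals are adjacent in $G$, i.e.\ their interiors are disjoint, so both pathlets are interior-disjoint; moreover $\bigcup\I_1\cup\bigcup\I_2=\bigcup\I'=\bigcup\I$. The running time is dominated by the call to \cref{lem:reducing_ply}, giving $\bigO(|\I|\log|\I|)$ overall. The only steps requiring care are the triangle-freeness argument (equivalently, the claim that ply at most two forces a proper $2$-colouring, and that at most one previously coloured interval can obstruct the current one) and the bookkeeping needed to stay within the stated time bound; neither poses a real obstacle, so I expect the proof to go through cleanly.
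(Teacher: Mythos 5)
Your proposal is correct and follows essentially the same route as the paper: apply \cref{lem:reducing_ply} to get a ply-two subfamily $\I'$ with the same union, then scan $\I'$ by left endpoint and greedily split it into two interior-disjoint classes (the paper phrases this as ``greedily take intervals into $\I_1$, put the rest in $\I_2$,'' which is exactly your two-colouring scan). The paper asserts the greedy works without comment; your triangle-free/perfect-graph justification is extra (and correct) but not a genuinely different method, and like the paper you quietly prove $\I_1\cup\I_2=\I'$ rather than the literal $=\I$ of the statement, which is fine since only the coverage matters.
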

    \begin{proof}
        First construct a subset $\I' \subseteq \I$ with ply at most two and $\bigcup \I' = \bigcup \I$ using~\cref{lem:reducing_ply}.
        Then sort $\I'$ based on increasing lower bound.
        Construct $\I_1$ by iterating over $\I'$ and greedily taking any interval that is interior-disjoint from the already picked intervals.
        Finally, set $\I_2 \gets \I' \setminus \I_1$.
    \end{proof}

\section{Constructing a pathlet-preserving simplification}
\label{app:constructing_pathlet_preserving_simplification}

\subsection{Defining our \texorpdfstring{$2\Delta$}{2Δ}-simplification \texorpdfstring{$S$}{S}}
\label{appsub:greedy_simplification}
        
    We consider the vertex-restricted simplification defined by Agarwal~\etal~\cite{agarwal_near-linear_2005} and generalize their $2\Delta$-simplification definition, allowing vertices to lie anywhere on $T$ (whilst still appearing in order along $T$).
    This way, we obtain a simplification with at most as many vertices as the optimal unrestricted $\Delta$-simplification (see \cref{fig:simplification_components}).
        
    \begin{figure}
        \centering
        \includegraphics{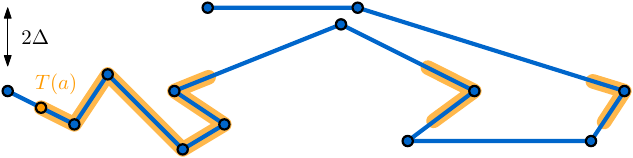}
        \caption{Consider the trajectory $T$ and some point $T(a)$. For some $\Delta$, we can indicate all $T(b) \in T$ with $a \leq b$ where for line $s = \overline{T(a)T(b)}$, $\dF(s, T[a, b]) \leq 2 \Delta$. Note that this set $\B(a)$ is not a connected set of intervals on $[1, n]$.}
        \label{fig:simplification_components}
    \end{figure}
        
    \begin{definition}
    \label{def:baset}
        Let $T$ be a trajectory with $n$ vertices, $\Delta \geq 0$ and $a \in [1, n]$. 
        We define the set $\B(a) = \{ b \geq a \mid \dF(\overline{T(a) T(b)}, T[a, b]) \leq 2\Delta \}$. 
    \end{definition}
        
    \begin{definition}
    \label{def:curve_S}
        Let $T$ be a trajectory with $n$ vertices and $\Delta \geq 0$.
        We define our $2\Delta$-simplified curve $S$ as follows: the first vertex of $S$ is $T(1)$. 
        The second vertex of $S$ may be any point $T(a)$ with $a$ as a rightmost endpoint of an interval in $\B(1)$.
        The third vertex of $S$ may be any point $T(b)$ with $b$ as a rightmost endpoint of an interval in $\B(a)$, and so forth. 
    \end{definition}
 
    Per definition of the set $\B(a)$, the resulting curve $S$ is a $2\Delta$-simplified curve.
    Let $(f, g)$ be any $2\Delta$-matching between $S$ and $T$ that matches the vertices of $S$ to the points on $T$ that define them.
    We prove that $(S, f, g)$ is a pathlet-preserving simplification.
    
    \begin{lemma}
    \label{lem:pathlet_preserving}
     The curve $S$ as defined above, together with the matching $(f, g)$, forms a pathlet-preserving simplification. 
    \end{lemma}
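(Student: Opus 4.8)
The plan is to exhibit a concrete matching $(f,g)$, reduce the complexity bound required by \cref{def:simplification} to a purely combinatorial statement about the vertices of $S$, and then prove that statement with a "greedy keeps up" argument driven by one sharp reachability lemma.

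\textbf{Setup and reduction.} Write $1 = c_0 \le c_1 \le \dots \le c_r = n$ for the parameters on $T$ that define the successive vertices of $S$; recall that $S$ extends as far as possible at each step, so $c_i = \max \B(c_{i-1})$, and $S(i{+}1) = T(c_i)$. Since $c_i \in \B(c_{i-1})$, the $i$-th edge $\overline{T(c_{i-1})T(c_i)}$ of $S$ is within \f distance $2\Delta$ of $T[c_{i-1},c_i]$; concatenating these per-edge $2\Delta$-matchings yields a $2\Delta$-matching $(f,g)$ of $S$ and $T$ that sends the $i$-th vertex of $S$ to $T(c_{i-1})$ and the $i$-th edge of $S$ to $T[c_{i-1},c_i]$. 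This is the matching we use (it is a valid choice in the sense of \cref{def:simplification}). For this matching, a short case analysis on where $a$ and $b$ fall among $c_0,\dots,c_r$ shows that the subcurve $S[s,t]$ matched to $T[a,b]$ satisfies $|S[s,t]| = \#\{\, i : c_i \in [a,b]\,\}$: an endpoint $s$ (resp.\ $t$) is an integer exactly when $a = c_i$ (resp.\ $b = c_i$) for some $i$, in which case that vertex is correctly counted, and otherwise the integer count inside $[s,t]$ is the number of $c_i$ strictly between $a$ and $b$. Hence it suffices to show $\#\{\, i : c_i \in [a,b]\,\} \le |P|$ for every curve $P$ with $\dF(P,T[a,b]) \le \Delta$, since $|P| \le |P| + 2 - |\N \cap \{s,t\}|$ always.

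\textbf{The sharp reachability claim.} Fix such a $P$ with vertices $p_1,\dots,p_\ell$, $\ell = |P|$, and a $\Delta$-matching between $P$ and $T[a,b]$. It induces parameters $a = q_0 \le q_1 \le \dots \le q_{\ell-1} = b$ with $p_i$ matched to $T(q_{i-1})$ and the edge $\overline{p_i p_{i+1}}$ matched to $T[q_{i-1},q_i]$ for $1 \le i \le \ell-1$. The heart of the proof is: \emph{for every $i \in \{1,\dots,\ell-1\}$ and every $c \in [q_{i-1},q_i]$ we have $q_i \in \B(c)$.} Indeed, $T(c)$ is matched to a point $y$ on $\overline{p_i p_{i+1}}$ with $\|T(c)-y\|\le \Delta$, the suffix $T[c,q_i]$ is matched to the subsegment $\overline{y\,p_{i+1}}$ with cost at most $\Delta$, and since the \f distance between two segments is at most the larger of the distances between corresponding endpoints, $\dF(\overline{T(c)T(q_i)},\overline{y\,p_{i+1}}) \le \max(\|T(c)-y\|,\|T(q_i)-p_{i+1}\|) \le \Delta$; the triangle inequality then gives $\dF(\overline{T(c)T(q_i)},T[c,q_i]) \le 2\Delta$, i.e.\ $q_i \in \B(c)$. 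The point is that we route from $c$ to $q_i$ through the actual edge of $P$ that $T[q_{i-1},q_i]$ is matched to; routing instead through the chord $\overline{T(q_{i-1})T(q_i)}$ would only give $4\Delta$, which is too weak for what follows.

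\textbf{Greedy keeps up, and conclusion.} List the vertices of $S$ lying in $[a,b]$ as $c_m \le c_{m+1} \le \dots \le c_M$ (if there are none we are done). For $m \le i \le M$ let $j(i)$ be the largest index with $q_{j(i)} \le c_i$; then $0 \le j(i) \le \ell-1$. If $i < M$, then $c_i < c_{i+1} \le b = q_{\ell-1}$, so $j(i) \le \ell-2$ and $c_i \in [q_{j(i)}, q_{j(i)+1}]$; the reachability claim gives $q_{j(i)+1} \in \B(c_i)$, and since $c_{i+1} = \max \B(c_i)$ we get $c_{i+1} \ge q_{j(i)+1}$, hence $j(i+1) \ge j(i)+1$. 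Thus $j(m) < j(m+1) < \dots < j(M)$ is strictly increasing within $\{0,\dots,\ell-1\}$, so $M - m + 1 \le \ell$. Combined with the first paragraph, $|S[s,t]| = \#\{i : c_i \in [a,b]\} \le \ell = |P| \le |P| + 2 - |\N \cap \{s,t\}|$; together with $(f,g)$ being a $2\Delta$-matching, this verifies \cref{def:simplification}. I expect the main obstacle to be the reachability claim of the second paragraph and its role here: getting $q_{j(i)+1} \in \B(c_i)$ with constant $2\Delta$ (not $4\Delta$) is what lets the furthest-reaching greedy vertex $c_{i+1}$ overtake $q_{j(i)+1}$, and hence forces the count bound; a secondary, bookkeeping-level point is pinning down the identity $|S[s,t]| = \#\{i : c_i \in [a,b]\}$ for the chosen matching.
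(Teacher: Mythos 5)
Your proof is essentially correct and takes a somewhat different route from the paper's. Both rest on the same key geometric fact: a subsegment of an edge of $P$ matched to $T[\alpha,\beta']$ certifies $\beta'\in\B(\alpha)$, because the \f distance between two segments is the larger of the two endpoint distances (this is what the paper cites as Lemma~3.1 of Agarwal~\etal). From there, the paper argues by contradiction and the pigeonhole principle: if $|P|$ were smaller than the maximal vertex-subcurve of $S[s,t]$, some edge of $P$ matched to $T[c,d]$ would span an entire edge $\overline{T(\alpha)T(\beta)}$ of $S$, giving $[\alpha,d]\subseteq\B(\alpha)$ and contradicting that $\beta<d$ is a rightmost endpoint of a component of $\B(\alpha)$. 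You instead run a direct exchange/counting argument, tracking the index $j(i)$ and showing it strictly increases, so each greedy vertex of $S$ in $[a,b]$ ``consumes'' a distinct breakpoint of $P$; this is a clean and arguably more transparent version of the same charging. Two small points to tighten. First, the identity $|S[s,t]|=\#\{i:c_i\in[a,b]\}$ is off by the endpoint terms: what is actually equal to $\#\{i:c_i\in[a,b]\}$ is the number of integer parameters in $[s,t]$, i.e.\ $|S[s,t]|-2+|\N\cap\{s,t\}|$ (the vertex count of the maximal vertex-subcurve, exactly as in the paper's proof). Your reduction to $\#\{i:c_i\in[a,b]\}\le|P|$ is still precisely the right target, but the displayed chain of inequalities as written doesn't quite go through. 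Second, \cref{def:curve_S} lets $c_{i+1}$ be the rightmost endpoint of \emph{any} connected component of $\B(c_i)$, not only the global maximum. Your step ``$q_{j(i)+1}\in\B(c_i)$ and $c_{i+1}=\max\B(c_i)$ hence $c_{i+1}\ge q_{j(i)+1}$'' would break if $q_{j(i)+1}$ sat in a later component while $c_{i+1}$ were taken from an earlier one. The fix is what the paper does implicitly: the subsegment argument gives the whole interval $[c_i,q_{j(i)+1}]\subseteq\B(c_i)$, so this interval lies in the \emph{first} connected component, and therefore every admissible choice of $c_{i+1}$ is $\ge q_{j(i)+1}$.
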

    \begin{proof}
        We show that for any subcurve $T[a, b]$ and all curves $P$ with $\dF(P, T[a, b]) \leq \Delta$, the subcurve $S[s, t]$ matched to $T[a, b]$ by $(f, g)$ has complexity $|S[s, t]| \leq |P| + 2 - |\N \cap \{s, t\}|$.
        For brevity, we write $X = T[a, b]$.
        
        Fix any curve $P$ with $\dF(P, X) \leq \Delta$. There exists a $\Delta$-matchings $(f', g')$ between $P$ and $X$.
        Per construction, the subcurve $S[s, t]$ has \f distance $\dF(S[s, t], X) \leq 2\Delta$ to $X$.
        Any vertex of $T$ that is a vertex of $S[s, t]$ is also a vertex of $T[a, b]$.
        Let $S[x, y]$ be the maximal vertex subcurve of $S[s, t]$.
        This curve naturally has $|S[s, t]| - 2 + |\N \cap \{s, t\}|$ vertices.
        We argue that $|P| \geq |S[x, y]|$.

        Suppose for sake of contradiction that $|P| < |S[x, y]|$.
        By the pigeonhole principle, there must exist an edge $e_S = \overline{T(\alpha) T(\beta)}$ of $S[x, y]$, as well as an edge $e_P$ of $P$ matched to some subcurve $T[c, d]$ of $T$ by $(f', g')$, such that $c \leq \alpha \leq \beta < d$.
        We claim that $\beta' \in \B(\alpha)$ for all $\beta' \leq d$.

        The proof is illustrated in~\cref{fig:closeproof}.
        For any $\beta' \leq d$ there exists a subedge $e = e_P[x_1, x_2]$ of $e_P$ that is matched to $T[\alpha, \beta']$ by $(f, g)$.
        Per definition of a $\Delta$-matching we have that $\dF(T[\alpha, \beta'], e') \leq \Delta$.
        The $2\Delta$-matching implies that $\lVert e(1) - T(\alpha) \rVert \leq 2\Delta$ and $\lVert e(2) - T(\beta') \rVert \leq 2\Delta$.
        We use this fact to apply~\cite[Lemma~3.1]{agarwal_near-linear_2005} and note that $\dF(e, \overline{T(\alpha)T(\beta')}) \leq \Delta$.
        Applying the triangle inequality, we conclude that $\dF(T[\alpha, \beta'], \overline{T(a)T(b')}) \leq 2 \Delta$. 
        It follows that $\beta' \in \B(\alpha)$. 
        
        We obtain that $[\alpha, d]$ is contained in the first connected component of $\B(\alpha)$. 
        However, per construction of $S$, $\beta$ is a rightmost endpoint of a connected component in $\B(\beta)$.
        This contradicts the fact that $\beta \in [\alpha, d)$.
    \end{proof}
    
    \begin{figure}[h]
        \centering
        \includegraphics{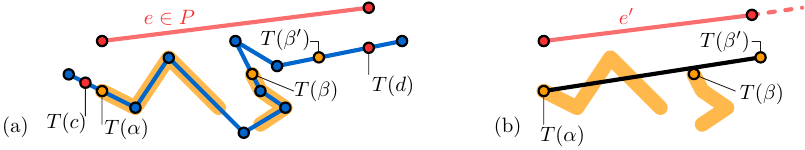}
        \caption{(a) The construction in the proof of Lemma~\ref{lem:pathlet_preserving}.
        We have an edge $e$ with $\dF(e, T[c, d]) \leq \Delta$. Moreover, for some $\alpha \in [c, d]$ we show $\B(\alpha)$ in orange where $\beta$ is the last value in some connected component of $\B(\alpha)$. (b) For any $\beta' \in [\alpha, d]$, there exists a subedge $e'$ of $e$ with $\dF(e', T[a, \beta']) \leq \Delta$.  }
        \label{fig:closeproof}
    \end{figure}
        
\subsection{Constructing the simplification}
        
    We give an $\bigO(n \log n)$ time algorithm for constructing our greedy simplification $S$ (\cref{def:curve_S}).
    Given any point $T(a)$ on $T$, we decompose the problem of finding a $b \in \B(a)$ over all edges of $T$.
    That is, given an $a \in [1, n]$, we consider an individual edge $T[i, i+1]$ of $T$. 
    We show how to report the maximal $b \in [i, i+1] \cap \B(a)$.
    Our procedure is based off of the work by Guibas~\etal~\cite{guibas93minimum_link} on ordered stabbing of disks in $\mathbb{R}^2$, and takes $\bigO(i-a)$ time.
    We fix a plane $H$ in $\R^d$ that contains $T(a)$ and $T[i, i+1]$. 
    On a high-level, we apply the argument by Guibas~\etal in $\R^d$ by restricting the disks to their intersection with $H$: 
    
    \begin{definition}
        Let $H$ be some fixed two-dimensional plane in $\R^d$. 
        For any $x \in [1, n]$ denote by $B_x$ be the ball in $H$ that is obtained by intersecting a ball of radius $2\Delta$ centered at $T(x)$ with $H$.
        For $a \leq b$ we say that a directed line segment $e$ in $H$ \emph{stabs} all balls in $[a, b]$ in order if for all $k \in \{a\} \cup ([a, b] \cap \N) \cup \{b\}$ there are points $p_k \in e \cap B_k$ such that $p_k$ comes before $p_{k'}$ on $e$ whenever $k \leq k'$ (see~\cref{fig:coneconstruction} (a)). 
    \end{definition}
    
    \begin{lemma}[{\cite[Theorem~14]{guibas93minimum_link}}]
        \label{lem:categorization}
        A line segment $e$ is within \f distance $2\Delta$ of a subcurve $T[a, b]$ of $T$ if and only if the following conditions are met:
        \begin{enumerate}
            \item $e$ starts within distance $2\Delta$ of $T(a)$,
            \item $e$ ends within distance $2\Delta$ of $T(b)$, and
            \item $e$ stabs all balls in $[a, b]$ in order. 
        \end{enumerate}
    \end{lemma}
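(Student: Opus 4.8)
The plan is to prove both directions of the equivalence directly from the definition of the (continuous) \f distance as a minimum cost over matchings, exploiting that matchings can be cut and re-glued at points where they agree; this is essentially the argument behind the cited theorem, specialised to the fixed plane $H$, which I would present in full since the setting (disks restricted to $H$) differs slightly from the $\R^2$ statement of Guibas~\etal. For the forward direction I would start from a $2\Delta$-matching $(f, g)$ between $e$ and $T[a, b]$: since such a matching pairs the start of $e$ with $T(a)$ and the end of $e$ with $T(b)$, conditions~1 and~2 are immediate from the matching having cost at most $2\Delta$. For condition~3, every $k \in \{a\} \cup ([a,b]\cap\N) \cup \{b\}$ is attained by the reparameterization of $T[a,b]$ at some parameter $t_k$, and these may be chosen non-decreasingly in $k$ (take $t_k$ minimal with $g(t_k) = k$); then $p_k \coloneqq e(f(t_k))$ is a point of $e$, hence of $H$, with $\lVert p_k - T(k) \rVert \le 2\Delta$, so $p_k \in B_k$, and since $f$ is non-decreasing the $p_k$ occur in order along $e$. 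Thus $e$ stabs all balls in $[a,b]$ in order.

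For the converse I would assume conditions~1--3 and write the elements of $\{a\} \cup ([a,b]\cap\N)\cup\{b\}$ in increasing order as $a = k_0 < k_1 < \dots < k_m = b$. For the interior indices I use the stabbing points $p_{k_1}, \dots, p_{k_{m-1}}$ from condition~3, and I let $p_{k_0}$ and $p_{k_m}$ be the two endpoints of $e$, which lie in $B_a$ and $B_b$ by conditions~1 and~2 and trivially bound the remaining $p_{k_j}$ in the order along $e$. This cuts $e$ into the consecutive subsegments $e[p_{k_j}, p_{k_{j+1}}]$ and $T[a,b]$ into the consecutive (sub)segments $T[k_j, k_{j+1}]$, with $\lVert p_{k_j} - T(k_j) \rVert \le 2\Delta$ for every $j$ because $p_{k_j} \in B_{k_j}$. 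I would then invoke the elementary fact that the \emph{linear} matching between two segments whose corresponding endpoints are within $2\Delta$ has cost at most $2\Delta$: for $e[p_{k_j}, p_{k_{j+1}}]$ and $T[k_j, k_{j+1}]$ it pairs $(1-t)p_{k_j} + tp_{k_{j+1}}$ with $(1-t)T(k_j) + tT(k_{j+1})$ and has cost
\[
    \max_{t\in[0,1]} \bigl\lVert (1-t)\bigl(p_{k_j} - T(k_j)\bigr) + t\bigl(p_{k_{j+1}} - T(k_{j+1})\bigr) \bigr\rVert \le 2\Delta
\]
by the triangle inequality, so $\dF(e[p_{k_j}, p_{k_{j+1}}], T[k_j, k_{j+1}]) \le 2\Delta$. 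Consecutive per-piece matchings agree at the shared point (each matches $p_{k_j}$ to $T(k_j)$), so their concatenation is a single monotone $2\Delta$-matching between $e$ and $T[a,b]$, giving $\dF(e, T[a,b]) \le 2\Delta$.

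I do not expect a deep obstacle: the crux is the elementary segment fact just used, together with the observation that matchings agreeing at a point concatenate. The one point that needs care is that $B_k$ is the \emph{intersection} with $H$ of the radius-$2\Delta$ ball around $T(k)$, so that $p_k \in B_k$ still certifies $\lVert p_k - T(k) \rVert \le 2\Delta$ even though the curve $T[a,b]$ itself generally does not lie in $H$; this is precisely what lets the converse go through for a genuinely $d$-dimensional subcurve $T[a,b]$. I would also check the routine bookkeeping that concatenating monotone reparameterization pairs stays valid, and dispatch the degenerate cases where some index coincides with $a$ or $b$ (so it is not duplicated in the sorted list) or where $e$ degenerates to a point.
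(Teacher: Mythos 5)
The paper does not actually prove \cref{lem:categorization}: it imports the statement wholesale from Guibas~\etal\ as Theorem~14 and moves on, so there is no proof of record to compare against. Your decision to give a self-contained argument is therefore the more informative choice, and it is correct.

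Both directions are sound. Forward: taking $t_k$ minimal with $g(t_k) = k$ gives a non-decreasing sequence because $g$ is non-decreasing and surjective, $f$ then carries this to a non-decreasing sequence of parameters on $e$, and $p_k = e(f(t_k)) \in e \subset H$ with $\lVert p_k - T(k)\rVert \le 2\Delta$, which is exactly membership in $B_k$. Converse: replacing the stabbing points at the indices $a$ and $b$ by the endpoints of $e$ is legitimate precisely because the endpoints of $e$ bound every other point of $e$ in the order, and they lie in $B_a$, $B_b$ respectively since conditions~1--2 give the distance bound and $e \subset H$ gives membership in $H$; the per-piece linear matching bound is the standard convexity estimate, and the pieces glue because consecutive linear matchings both send $p_{k_j}\mapsto T(k_j)$. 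You also flag the one place this paper actually departs from Guibas~\etal\ --- that $B_k$ is the intersection of the radius-$2\Delta$ ball with $H$, so that $p_k\in B_k$ still certifies $\lVert p_k - T(k)\rVert\le 2\Delta$ even though $T$ itself leaves $H$ --- which is the key observation that makes the $\R^2$ argument transfer to a two-dimensional slice of $\R^d$. The only housekeeping left open (degenerate pieces where $p_{k_j}=p_{k_{j+1}}$, coincidence of $a$ or $b$ with an integer, a degenerate $e$) is genuinely routine: a constant piece is still a valid monotone reparameterization, and the set $\{a\}\cup([a,b]\cap\N)\cup\{b\}$ already deduplicates endpoints. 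I see no gap.
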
        
        
\subparagraph*{Computing the maximal $b \in [i, i+1] \cap \B(a)$.}
    For any edge $e = \overline{T(a)T(b)}$ of $Z$, the endpoints lie on $T$ and thus $e$ trivially satisfies the first two criteria. 
    It follows that if we fix some $T(a)$ on $S$ and some edge $T[i, i+1]$, then the maximal $b \in [i, i+1]$ (with $b \geq a$) for which $\overline{T(a) T(b)}$ stabs balls $[a, b]$ in order is also the maximal $b \in [i, i+1] \cap \B(a)$. 
    We consider the following (slightly reformulated) lemma by Guibas~\etal~\cite{guibas93minimum_link}:
    
    \begin{lemma}[{\cite[Lemma~8]{guibas93minimum_link}}]
        \label{lem:ordered_intervals}
        Let $[a_j, b_j]$ be a sequence of intervals. 
        There exist $p_j \in [a_j, b_j]$ with $p_j \leq p_k$ for all $j \leq k$, if and only if there is no pair $j \leq k$ with $b_k < a_j$.
    \end{lemma}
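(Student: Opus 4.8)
The plan is to prove the biconditional by handling the two implications separately: the forward direction is immediate from transitivity of $\leq$, and the backward direction follows from an explicit greedy construction of the stabbing points.

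For the forward implication, I would argue by contraposition. Suppose points $p_j \in [a_j, b_j]$ exist with $p_j \leq p_k$ for all $j \leq k$. Fix any pair $j \leq k$. Then $a_j \leq p_j \leq p_k \leq b_k$, so in particular $a_j \leq b_k$, which precisely rules out the existence of a pair $j \leq k$ with $b_k < a_j$. Hence if such a bad pair exists, no monotone system of stabbing points can exist.

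For the backward implication, assume no pair $j \leq k$ satisfies $b_k < a_j$. I would exhibit the witnesses explicitly by setting $p_j := \max_{i \leq j} a_i$. By construction the sequence $(p_j)$ is non-decreasing in $j$, and $p_j \geq a_j$, so the only nontrivial point is verifying $p_j \leq b_j$ for every $j$. Write $p_j = a_{i^\star}$, where $i^\star \leq j$ is an index attaining the maximum. Applying the hypothesis to the pair $i^\star \leq j$ forbids $b_j < a_{i^\star}$, so $b_j \geq a_{i^\star} = p_j$, as required. Thus $p_j \in [a_j, b_j]$ for all $j$, and the $p_j$ are monotone, completing this direction.

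The main obstacle — in truth the only subtlety — is choosing the correct greedy rule and invoking the hypothesis with the correctly ordered pair of indices: $i^\star$ must be the \emph{earlier} index and $j$ the \emph{later} one, which is exactly the orientation in which the no-bad-pair condition is stated. Everything else is a one-line verification. One could equivalently present the backward direction inductively via $p_1 = a_1$ and $p_j = \max(a_j, p_{j-1})$, but the closed form $p_j = \max_{i \leq j} a_i$ makes the argument essentially self-evident and avoids the bookkeeping.
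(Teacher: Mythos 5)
The paper states this as Lemma~8 of Guibas~\etal\ and gives no proof of its own, so there is nothing to compare against; on its own terms, your proof is correct and essentially the standard argument. The forward direction follows from $a_j \leq p_j \leq p_k \leq b_k$ for $j \leq k$ (despite your labelling it a contraposition, this is in fact a direct proof, which is cleaner anyway), and the backward direction is handled correctly by the greedy choice $p_j = \max_{i \leq j} a_i$: monotonicity and $p_j \geq a_j$ are immediate, and $p_j \leq b_j$ is exactly what the no-bad-pair hypothesis gives when applied to the maximizing index $i^\star \leq j$. This is a complete and self-contained proof of the cited lemma.
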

    
    The above lemma is applicable to segments in $H$ stabbing balls in $H$. 
    Indeed, consider all integers $j \in [a, i]$. 
    We may view any directed line segment $e$ in $H$ as (part of) the real number line, and view the intersections between $e$ and the disks $D_j$ as intervals.
    \Cref{lem:ordered_intervals} then implies that $e$ stabs $[\ceil{a}, i]$ in order, if and only if no integers $j, k \in [a, i]$ exist with $j \leq k$ such that $e$ leaves $D_k$ before it enters $D_j$ (assuming $e$ intersects all disks).
    
    For all integers $j \in [a, i]$, let $W_j := \{ p \in H \mid \overline{T(a) p} \textnormal{ intersects } D_j \}$. 
    We define the \emph{stabbing wedge} $\SW_j := \{ p \in H \mid \overline{T(a) p} \textnormal{ intersects } [a, j]  \textnormal{ in order} \}$. 
    We prove the following:
    
    \begin{lemma}
        \label{lem:intersection}
        Either $\SW_j = \bigcap_{ k \in [a, j] \cap \N } W_k$, or $\SW_j = \emptyset$.
    \end{lemma}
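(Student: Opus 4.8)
The plan is to establish the two inclusions $\SW_j\subseteq\bigcap_{k\in[a,j]\cap\N}W_k$ and, assuming $\SW_j\neq\emptyset$, the reverse inclusion $\bigcap_k W_k\subseteq\SW_j$; the first is trivial and the second rests on \cref{lem:ordered_intervals} together with convexity of the disks.

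The inclusion $\SW_j\subseteq\bigcap_k W_k$ is immediate: a segment $\overline{T(a)p}$ that stabs every $D_k$ in order in particular meets every $D_k$, so $p\in W_k$ for all $k$. For the converse, suppose $\SW_j\neq\emptyset$ and fix a witness $q\in\SW_j$, so $\overline{T(a)q}$ stabs $D_{\lceil a\rceil},\dots,D_j$ in order. Let $p\in\bigcap_k W_k$ be arbitrary; I must show $\overline{T(a)p}$ stabs these disks in order. Since $\overline{T(a)p}$ meets every $D_k$, the sets $I_k:=\overline{T(a)p}\cap D_k$ are nonempty intervals along the line through the segment, so by \cref{lem:ordered_intervals} it suffices to rule out an \emph{inversion}: a pair $j'\le k$ in $[a,j]$ for which $\overline{T(a)p}$ leaves $D_k$ strictly before it enters $D_{j'}$.

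Assume, for contradiction, that $(j',k)$ is such a pair. Let $x$ be the point where $\overline{T(a)p}$ leaves $D_k$ and $y$ the point where it enters $D_{j'}$; then $x$ lies strictly before $y$ on $\overline{T(a)p}$, with $x\in D_k$ and $y\in D_{j'}$. Since $\overline{T(a)q}$ stabs the disks in order and $j'\le k$, there are points $x'\in\overline{T(a)q}\cap D_k$ and $y'\in\overline{T(a)q}\cap D_{j'}$ with $y'$ not after $x'$ on $\overline{T(a)q}$, and by convexity $\overline{x\,x'}\subseteq D_k$ and $\overline{y\,y'}\subseteq D_{j'}$. If $\overline{T(a)p}$ and $\overline{T(a)q}$ lie on a common line through $T(a)$, then (as $x$ is the farthest point of $D_k$ on that line still inside $\overline{T(a)p}$, while $x'\in D_k$ lies on the same line) $x'$ is no farther from $T(a)$ than $x$, and likewise $y'$ is no nearer than $y$; with $x$ strictly before $y$ this forces $x'$ strictly before $y'$, contradicting the choice of $x',y'$. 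Otherwise the two segments leave $T(a)$ in different directions and bound a planar wedge: $\overline{x\,x'}$ joins a point of $\overline{T(a)p}$ at distance $\lVert T(a)x\rVert$ to a point of $\overline{T(a)q}$ at distance $\lVert T(a)x'\rVert$, while $\overline{y\,y'}$ joins a point of $\overline{T(a)p}$ at the strictly larger distance $\lVert T(a)y\rVert$ to a point of $\overline{T(a)q}$ at distance $\lVert T(a)y'\rVert\le\lVert T(a)x'\rVert$, so the two segments swap sides of the wedge and must cross at a point $z\in D_k\cap D_{j'}$ whose direction from $T(a)$ lies strictly between those of $\overline{T(a)p}$ and $\overline{T(a)q}$.

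The remaining step is to turn this into a contradiction, and I expect it to be the main obstacle. The point $z\in D_k\cap D_{j'}$ shows that the ray from $T(a)$ through $z$ enters $D_{j'}$ no later than it leaves $D_k$; the same holds for $\overline{T(a)q}$, yet it fails for $\overline{T(a)p}$. The fact I would establish is that this is impossible: for the fixed apex $T(a)$ and the fixed pair of disks $D_{j'},D_k$, whether a ray from $T(a)$ that meets both of them leaves $D_k$ before entering $D_{j'}$ does not depend on the ray. I would obtain this by reformulating the condition as ``the exit point of $D_k$ along the ray lies in the convex shadow region $W_{j'}=\{p:\overline{T(a)p}\cap D_{j'}\neq\emptyset\}$'' and arguing, from the convexity of $W_{j'}$ and the fact that the exit points of $D_k$ trace a single circular arc as the ray rotates, that the set of directions satisfying this condition, and its complement among directions meeting both disks, are each connected. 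Granting this, $\overline{T(a)p}$ and $\overline{T(a)q}$ must agree on the condition, contradicting the inversion on $\overline{T(a)p}$. Hence $\overline{T(a)p}$ has no inversion, so it stabs $D_{\lceil a\rceil},\dots,D_j$ in order and $p\in\SW_j$; this gives $\bigcap_k W_k\subseteq\SW_j$ and with it the lemma.
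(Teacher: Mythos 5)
Your first inclusion and the reduction of the converse to ruling out an inversion via \cref{lem:ordered_intervals} are fine, and the crossing argument producing a point $z\in D_k\cap D_{j'}$ is a nice observation. But the step you flag as ``the main obstacle'' is a genuine gap, and the route you sketch to close it does not work.

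The fact you want --- that for fixed $T(a), D_{j'}, D_k$, whether a ray leaves $D_k$ before entering $D_{j'}$ is independent of the ray --- is not delivered by the connectedness argument. Even if you could show that both the set of ``no-inversion'' directions and its complement (among directions hitting both disks) are connected, that only says the domain interval is split into at most two subintervals, one of each kind; it does not force either piece to be empty, so $\overline{T(a)p}$ and $\overline{T(a)q}$ need not agree. Worse, the complement is not in general connected: writing $\rho_k(\theta)$ for the far distance to $D_k$ and $\nu_{j'}(\theta)$ for the near distance to $D_{j'}$ along the ray at angle $\theta$, the no-inversion set is $\{\theta : \rho_k(\theta)\ge\nu_{j'}(\theta)\}$, and $\rho_k-\nu_{j'}$ is concave, so this set is an interval but its complement can be two intervals. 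The extra information you extracted from $z$ (a third no-inversion direction strictly between those of $p$ and $q$) is consistent with $p$ simply lying outside that interval, so it yields no contradiction either.

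The paper avoids this pairwise dichotomy entirely. It proves the lemma by induction on $j$, showing $\SW_j$ equals $\SW_{j-1}\cap W_j$ or is empty. Given $p\in\SW_{j-1}\cap W_j\setminus\SW_j$, the inversion is between $D_j$ and some earlier $D_k$, and the inductive hypothesis places the \emph{entire} wedge $\SW_{j-1}\subseteq W_k$ inside the tangent cone $C$ of $D_k$ from $T(a)$. Within that cone the inversion on $\overline{T(a)p}$ is upgraded to the statement that $D_j\cap C$ lies entirely on the near side of a supporting line of $D_k$, so \emph{every} candidate segment $\overline{T(a)q}$ with $q\in\SW_{j-1}\cap W_j$ exits $D_j$ before reaching $D_k$, forcing $\SW_j=\emptyset$. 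This leverage from all of $D_{\lceil a\rceil},\dots,D_{j-1}$ through the inductive hypothesis is exactly what your direct, two-disk argument is missing; a purely pairwise dichotomy cannot substitute for it.
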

    \begin{proof}
        The proof is by induction. 
        The base case is that, trivially, $\SW_{\ceil{a}} = W_{\ceil{a}}$.
        Any line $\overline{T(a) p}$ that intersects $[a, j]$ also intersects $[a, j-1]$, thus whenever $\SW_{j-1}$ is empty, then $\SW_j$ must also be empty. 
        Suppose now that $\SW_{j-1} = \bigcap_{ k \in [a, j - 1] \cap \N } W_k$. We show that  $\SW_j$ is either equal to $\SW_{j-1} \cap W_j$, or it is empty (which, by induction, shows the lemma). 
         
        First we show that $\SW_j \subseteq \SW_{j-1} \cap W_j$.
        Suppose $\SW_j$ is non-empty, and take a point $p \in \SW_j$.
        By definition of stabbing wedge $\SW_j$, the segment $\overline{T(a) p}$ stabs $[a, j]$ in order, and thus $[a, j-1]$ in order. 
        Furthermore, $p$ must lie in $W_j$ for $\overline{T(a) p}$ to be able to stab disk $D_j$.
    
        Next we show that $ \SW_{j-1} \cap W_j \subseteq \SW_j$.
        By~\cref{lem:ordered_intervals}, $p \in \SW_j$ if and only if $\overline{T(a) p}$ first enters all disks $D_{\ceil{a}}, \dots, D_{j-1}$ before exiting disk $D_j$.
        Fix some $p \in \SW_{j-1} \cap W_j$, for all $k < j$ the line $\overline{T(a) p}$ intersects $D_k$. 
        If $p \neq \in \SW_j$ then there must exist a $k < j$ where $\overline{T(a) p}$ exists $D_j$ before it enters $D_k$. 
        The area $\SW_{j-1}$ must be contained in the cone $C \subset H$ given by $T(a)$ and the two tangents of $D_k$ to $T(a)$ (\cref{fig:coneconstruction} (b)) and thus $\overline{T(a) p}$ is contained in $C$. 
        Since $\overline{T(a) p}$ intersects $D_k$ in $C$ after $D_j$, it must be that $D_j \cap C$ is contained in a triangle $C^*$ formed by the boundary of $C$ and another tangent of $D_k$. 
        However, this means that any segment $\overline{T(a) q}$ (for $q \in \SW_{j-1} \cap W_j)$ that stabs the disks $[a, j - 1]$ in order must intersect $C^*$ before it intersects $D_k$. Thus, by Lemma~\ref{lem:ordered_intervals}, there is no segment  $\overline{T(a) q}$ that stab the disks $[a, j]$ in order and $\SW_j$ is empty. 
        Thus, either $\SW_j$ is empty or $\SW_j = \bigcap_{\ceil{a} \leq k \leq j} W_k$. 
    \end{proof}
    
    \begin{figure}
        \centering
        \includegraphics{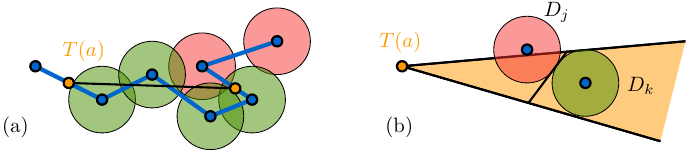}
        \caption{(a) For two points $T(a)$ and $T(b)$ on $T$, we consider all $j \in [a, b] \cap N$ and draw a ball with radius $2\Delta$ around them in green. 
        (b) For all $k \in [a, j-1] \cap \N$, the area $\SW_{j-1}$ must be contained in the orange cone. }
        \label{fig:coneconstruction}
    \end{figure}
    
    \begin{lemma}
        \label{lem:maximum_of_edge}
        Given $a \in [1, n]$ and edge $T[i, i+1]$ of $T$, we can compute the maximum $b \in [i, i+1] \cap \B(a)$, or report that no such $b$ exists, in $\bigO(1+i-a)$ time.
    \end{lemma}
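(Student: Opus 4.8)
The plan is to reduce the test ``$b\in\B(a)$'' to a question about the stabbing wedges $\SW_j$ of \cref{lem:intersection}, and then to build the relevant wedge incrementally at $\bigO(1)$ cost per disk. I begin with \cref{lem:categorization} applied to the segment $e=\overline{T(a)T(b)}$: both endpoints of $e$ lie on $T$, so conditions~(1) and~(2) of that lemma hold trivially, and hence $b\in\B(a)$ if and only if $e$ stabs all balls in $[a,b]$ in order. For $b\in[i,i+1]$, the balls of $[a,b]$ with a non-integer index are only $B_a$ and $B_b$ (and $B_{i+1}$ in the boundary case $b=i+1$), and $e$ stabs each of these at one of its endpoints --- $T(a)$ first, $T(b)$ last. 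Therefore $b\in[i,i+1]\cap\B(a)$ if and only if $e$ stabs the integer-indexed balls $B_{\ceil{a}},\dots,B_i$ in order, which, by the definition of $\SW_i$, is exactly the condition $T(b)\in\SW_i$. So it suffices to compute $\SW_i$ and then return the largest $b\in[i,i+1]$ with $T(b)\in\SW_i$, or report that none exists.

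To compute $\SW_i$, I apply \cref{lem:intersection} as an incremental rule: set $\SW_{\ceil{a}}=W_{\ceil{a}}$, and for $j=\ceil{a}+1,\dots,i$ put $\SW_j\gets\SW_{j-1}\cap W_j$, unless the ordering condition fails, in which case $\SW_j=\emptyset$ and we halt and report that no valid $b$ exists. The failure is detected in $\bigO(1)$ time by the argument in the proof of \cref{lem:intersection}, comparing $B_j$ against the constant-complexity region $\SW_{j-1}$. Each $W_j$ has $\bigO(1)$ descriptive complexity, and the ordered-stabbing analysis of Guibas~\etal~\cite{guibas93minimum_link} guarantees that $\SW_j$ has $\bigO(1)$ complexity throughout, so each of the at most $i-\ceil{a}$ iterations costs $\bigO(1)$, and the whole loop runs in $\bigO(1+i-a)$ time. (If $\ceil{a}>i$ the loop is empty, $\SW_i$ is the whole plane $H$, and every $b\in[i,i+1]$ is valid.)

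Finally, since $b\mapsto T(b)$ parametrises the edge $\overline{T(i)T(i+1)}$ as a line segment, the structure of $\SW_i$ forces $\{b\in[i,i+1]:T(b)\in\SW_i\}$ to be a single (possibly empty) subinterval, whose right endpoint --- if it exists --- is the desired maximum $b\in[i,i+1]\cap\B(a)$; this is a constant-complexity intersection of $\SW_i$ with the edge, costing $\bigO(1)$. Summing up yields the claimed $\bigO(1+i-a)$ running time. I expect the main obstacle to be the geometric bookkeeping: showing that the wedges $\SW_j$ genuinely stay of constant complexity and that each incremental update (including the emptiness test) runs in $\bigO(1)$ time. This is where the proof leans most heavily on the ordered-stabbing machinery of Guibas~\etal~\cite{guibas93minimum_link} together with \cref{lem:intersection}, and where care is needed with the non-integer boundary balls $B_a$, $B_b$ and the special value $b=i+1$.
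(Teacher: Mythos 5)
Your reduction to ``$b \in [i,i+1] \cap \B(a)$ iff $T(b) \in \SW_i$'' via \cref{lem:categorization} matches the paper exactly, and so does the overall plan of getting the answer from $\SW_i$. Where you diverge --- and where the gap lies --- is in how $\SW_i$ is handled.

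You propose to \emph{explicitly maintain} $\SW_j$ as the running intersection $\SW_{j-1} \cap W_j$, asserting that each $\SW_j$ has constant descriptive complexity and that the emptiness test costs $\bigO(1)$ per step ``by the argument in the proof of \cref{lem:intersection}.'' Neither assertion is justified. Each $W_j$ is bounded by two rays from $T(a)$ and a circular arc; the arc is concave from the outside, so $W_j$ is \emph{not} convex, and the intersection $\bigcap_{k \le j} W_k$ can have $\Theta(j)$ arcs on its boundary. There is no reason a priori that this region collapses to $\bigO(1)$ complexity, and the Guibas~\etal~machinery you cite does not supply this (the intersection they maintain in $\R^2$ encodes a different object --- all feasible stabbing \emph{segments}, not the feasible \emph{endpoints} for a fixed start $T(a)$ --- and the paper itself points out that Guibas~\etal's structure does not generalize). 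Likewise, \cref{lem:intersection}'s proof is an existence argument, not an $\bigO(1)$-time emptiness oracle; you acknowledge both concerns at the end but do not resolve them, and they are exactly what makes the na\"ive incremental approach fail.

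The paper's proof sidesteps both problems and is worth contrasting. It \emph{never} computes $\SW_j$. Instead, for each integer $j \in [a,i]$ it independently intersects the single constant-complexity region $W_j$ with the edge $T[i,i+1]$ to obtain the latest parameter $b_j$ with $T(b_j) \in W_j$, in $\bigO(1)$ time per $j$, and then sets $b = \min_j b_j$. By \cref{lem:intersection}, if $\SW_i$ is non-empty then $\SW_i = \bigcap_j W_j$ and this $b$ is the right answer; if $\SW_i$ is empty the candidate is bogus. Rather than testing emptiness incrementally, the paper runs a \emph{single} Alt--Godau check $\dF(T[a,b], \overline{T(a)T(b)}) \le 2\Delta$ at the end, costing $\bigO(1+i-a)$, which simultaneously validates the candidate and detects the empty case. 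This replaces your $\Theta(i-a)$ per-step emptiness tests (each of unbounded cost) with one global test of the required total cost, and avoids maintaining any intersection region at all. To repair your proof you would need to either prove the constant-complexity claim for $\SW_j$ (which appears false) or adopt the paper's ``compute-then-validate'' trick.
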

    \begin{proof}
        By~\cref{lem:categorization} (and the fact that $T(a)$ and $T(b)$ always lie on $T$) $b \in [i, i+1] \cap \B(a)$ if and only if $T(b) \in \SW_i$.
        For any edge $T[i, i+1]$, we compute the maximal $b \in [i, i+1] \cap \B(a)$ by first assuming that $\SW_i$ is non-empty.
        We check afterwards whether this assumption was correct, and if not, we know that no $b \in [i, i+1]$ exists with $\dF(T[a, b], \overline{T(a) T(b)}) \leq 2\Delta$.
        
        For all $j \in [a, i] \cap N$, we compute the last value $b_j \in [i, j]$ such that $T(b_j) \in W_j$. 
        This can be done in $\bigO(1)$ time per integer $j$, as wedges in the plane are formed by two rays and a circular arc in $H$, which we can intersect in $\bigO(1)$ time. 
        Then we set $b = \min_j b_j$.
        
        To check whether $\SW_i$ is non-empty, we determine if $\dF(T[a, b], \overline{T(a) T(b)}) \leq 2\Delta$.
        This takes $\bigO(1+i-a)$ time with the algorithm of Alt and Godau~\cite{alt95continuous_frechet}.
        The assumption that $\SW_i$ is non-empty is correct precisely if $\dF(T[a, b], \overline{T(a) T(b)}) \leq 2\Delta$.
        If $\SW_i$ is empty, $[i, i+1] \cap \B(a)$ is empty and no output exists. 
        If $\SW_i$ is non-empty, then by Lemma~\ref{lem:intersection}, the $b$ we choose is the maximal $b \in [i, i+1] \cap \B(a)$. 
    \end{proof}
    
    \begin{lemma}
        \label{lem:computing_good_y}
        Given $a \in [1, n]$, we can compute a value $b^*$ that is the maximum of some connected component of $\B(a)$ in $\bigO((1+b^*-a) \log n)$ time.
    \end{lemma}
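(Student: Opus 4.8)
The plan is to return the right endpoint of the \emph{first} connected component of $\B(a)$ as $b^*$, and to find it by a single forward sweep over the edges of $T$ that maintains enough state to charge only $\bigO(\log n)$ time to each edge.

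First I would pin down $b^*$. The \f distance depends continuously on its endpoints, so $\B(a)$ is closed; moreover $a\in\B(a)$ since $\dF(\overline{T(a)T(a)},T[a,a])=0$. Hence the component of $\B(a)$ containing $a$ is a closed interval $[a,b^*]$, with $b^*=\sup\{b\ge a:[a,b]\subseteq\B(a)\}$, and the predicate $b\mapsto\bigl([a,b]\subseteq\B(a)\bigr)$ is monotone (downward closed on $[a,n]$). I would also note that $[a,\ceil{a}]\subseteq\B(a)$ automatically: on this interval $T[a,b]$ is a straight sub-segment of an edge of $T$, hence equals $\overline{T(a)T(b)}$, so their \f distance is $0$. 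Thus $b^*\ge\ceil{a}$ and it suffices to walk over the edges $T[\ceil{a},\ceil{a}+1],T[\ceil{a}+1,\ceil{a}+2],\dots$ while maintaining the invariant $[a,i]\subseteq\B(a)$, stopping at the first edge $T[i,i+1]$ not entirely contained in $\B(a)$. By~\cref{lem:categorization} and the analysis behind~\cref{lem:maximum_of_edge} (which rests on~\cref{lem:intersection}), under this invariant the set $[i,i+1]\cap\B(a)$ is a single sub-interval $[i,r_i]$ of the edge; the invariant persists precisely when $r_i=i+1$, and at the first index with $r_i<i+1$ we output $b^*=r_i$ (and $b^*=n$ if the walk never stops). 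Computing this final $r_i$ exactly with~\cref{lem:maximum_of_edge} costs $\bigO(1+i-a)=\bigO(1+b^*-a)$, which is within budget.

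The real work is bounding the time of the walk. One cannot replace it by a binary search on the predicate ``$m\in\B(a)$'', because $\B(a)$ is disconnected and a probe may land in a later component; and re-invoking~\cref{lem:maximum_of_edge} from scratch at each edge costs $\bigO(1+i-a)$ per edge, hence $\Theta\bigl((1+b^*-a)^2\bigr)$ overall, which is too slow. Instead I would carry the stabbing wedge forward incrementally while marching over the edges, in the spirit of Guibas~\etal~\cite{guibas93minimum_link}: the ordered stabbing wedge stays of bounded combinatorial complexity (this is exactly what makes the intersection identity of~\cref{lem:intersection} useful, and underlies the constant-time-per-ball work in~\cref{lem:maximum_of_edge}), so appending one more ball's constraint and testing whether the next edge of $T$ lies entirely inside the current wedge can each be done in $\bigO(\log n)$ time per edge with an appropriate search structure; since the walk visits only $\bigO(1+b^*-a)$ edges before stopping, this yields the claimed $\bigO((1+b^*-a)\log n)$ bound. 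I expect the main obstacle to be exactly this incremental maintenance: in~\cref{lem:maximum_of_edge} the two-dimensional supporting plane $H$ is chosen afresh for each edge, so to reuse work between consecutive edges one must instead carry the wedge in $\R^d$, extending the planar reasoning of~\cref{lem:intersection} to the tangent cones of the balls $B_j$ and checking that it still has bounded complexity there; verifying this, and that each per-edge update and containment test runs in $\bigO(\log n)$ time, is where the argument will need the most care.
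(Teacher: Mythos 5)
Your proposal diverges from the paper's in a way that turns out to matter. The paper does not scan edges linearly; it performs exponential search followed by binary search over the edges $T[i,i+1]$, invoking \cref{lem:maximum_of_edge} only at $\bigO(\log n)$ probe positions. You dismiss binary search on the grounds that $\B(a)$ is disconnected, so a probe of the form ``is $m \in \B(a)$?'' gives no directional information. That objection is correct for the naive predicate, but the paper does not use that predicate. Instead, each probe runs \cref{lem:maximum_of_edge} on an edge $T[i,i+1]$ and inspects its richer output: if it returns some $b\in[i,i+1)$, that $b$ is already a right endpoint of some component and we are done; if it returns $b=i+1$, there is a component endpoint strictly to the right, so search later edges; if it returns nothing, then since $a\in\B(a)$ there is a component endpoint strictly to the left, so search earlier edges. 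The lemma only asks for the maximum of \emph{some} component, not the first one, and this three-way outcome gives exactly the directional signal a binary search needs. Exponential search first keeps every probed index $i$ within $\bigO(b^*-a)$ of $a$, so each of the $\bigO(\log n)$ probes costs $\bigO(1+b^*-a)$ and the total is the claimed $\bigO((1+b^*-a)\log n)$.

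Your alternative — incrementally carrying the stabbing wedge forward over consecutive edges in the spirit of Guibas~\etal\ — runs directly into the obstacle the paper names as the reason Guibas~\etal's algorithm is restricted to $\R^2$: in $d>2$ the tangent regions become cones, and the running intersection of cones does not have bounded combinatorial complexity, so there is no obvious $\bigO(\log n)$-time-per-edge update or containment test. You flag this yourself as ``where the argument will need the most care,'' but it is in fact the crux, and without resolving it your running-time bound is unsubstantiated. The paper's approach is specifically designed to avoid this: by recomputing the wedge from scratch at each probe (with a fresh supporting plane $H$ for that edge, so the computation stays two-dimensional), and by doing only $\bigO(\log n)$ probes, it never needs to maintain a high-dimensional cone intersection across edges.

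One further small point: you fix $b^*$ to be the right endpoint of the component containing $a$. The lemma statement allows returning the maximum of \emph{any} component, and the downstream use in Definition~\ref{def:curve_S} only needs some rightmost endpoint of an interval in $\B(a)$, so this self-imposed restriction is unnecessary and is what drives you toward a linear scan in the first place.
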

    \begin{proof}
        We use Lemma~\ref{lem:maximum_of_edge} in conjunction with exponential and binary search to compute the maximum $b^*$ of some connected component of $\B_{2\Delta}(a)$:
        
        We search over the edges $T[i, i+1]$ of $T$.
        For each considered edge we apply~\cref{lem:maximum_of_edge} which returns some $b \in [i, j+1] \cap \B(a)$ (if the set is non-empty). We consider three cases:
        
        If $b \in [i, i+1)$, then this value is the maximum of some connected component of $\B(a)$. We stop the search and output $b$.
        
        If the procedure reports the value $b = i+1$ then this value may not necessarily be the maximum of a connected component.  However, there is sure to be a maximum of at least $b$. Hence we continue the search among later edges of $T$ and discard all edges before, and including, $T[i, i+1]$.
        
        If the procedure reports no value then $[i, i+1] \cap \B(a) = \emptyset$.
        Since trivially $a \in \B(a)$, it must be that there is a connected component whose maximum is strictly smaller than $i$.
        We continue the search among earlier edges of $T$ and discard all edges after, and including, $T[i, i+1]$.
        
        The above algorithm returns the maximum $b^* \in [i^*, i^*+1)$ of some connected component of $\B(a)$.
        By applying exponential search first, the edges $T[i, i+1]$ considered all have $i \leq 2i^* - a$.
        Hence we compute $b^*$ in $\bigO((1+b^*-a) \log n)$ time.
    \end{proof}
    
    We now iteratively apply~\cref{lem:computing_good_y} to construct our curve $S$.
    We obtain a $2\Delta$-matching $(f, g)$ by constructing separate matchings between the edges $\overline{T(a) T(b)}$ of $S$ and the subcurves $T[a, b]$ that they simplify.
    By~\cref{lem:pathlet_preserving} this gives a pathlet-preserving simplification $(S, f, g)$.

    \thmSimplification*

    \end{document}